\pdfoutput=1
\documentclass[12pt,a4paper]{amsart}
\usepackage[a4paper,inner=2.5cm,outer=2.5cm,top=2.5cm,bottom=2.5cm]{geometry}
\usepackage{amsmath,amssymb,amsthm,enumerate,mathtools,stmaryrd
}
\usepackage{hyperref}
\hypersetup{colorlinks=true,linkcolor=blue,citecolor=teal,filecolor=magenta,urlcolor=cyan}

\usepackage[utf8]{inputenc}


\usepackage{makecell}
\usepackage{graphicx}

\usepackage{float}

\usepackage{extarrows}

\usepackage{xcolor}

\newcommand{\C}{\mathbb{C}}

\newcommand{\dtau}{\frac{\partial}{\partial \tau}}

\usepackage{url}

\usepackage[backend=bibtex,style=alphabetic,sorting=nyt,isbn=false,url=false,doi=true,maxalphanames=10,minalphanames=4,mincitenames=4,maxcitenames=10,minnames=4,maxnames=10,giveninits=true,maxbibnames=99]{biblatex}
\addbibresource{KPg0TR.bib}
\setcounter{biburllcpenalty}{7000}
\setcounter{biburlucpenalty}{8000}
\setcounter{biburlnumpenalty}{9000}

\theoremstyle{plain}
\newtheorem{theorem}{Theorem}[section]

\newtheorem{corollary}[theorem]{Corollary}

\theoremstyle{definition}
\newtheorem{definition}[theorem]{Definition}

\makeatletter
\@addtoreset{proofpart}{theorem}
\makeatother
\theoremstyle{remark}
\newtheorem{remark}[theorem]{Remark}

\newcommand{\Z}{\mathbb{Z}}

\newcommand{\cS}{\mathcal{S}}

\newcommand{\cK}{\mathcal{K}}

\newcommand{\cW}{\mathcal{W}}
\newcommand{\cT}{\mathcal{T}}

\newcommand{\VEVc}[1]{{\big\langle \! 0 \big| {#1} \big| 0 \!\big\rangle^\circ}}
\newcommand{\vac}{{\big| 0\! \big\rangle}}
\newcommand{\covac}{{\big\langle \! 0 \big| }}

\newcommand{\res}{\mathop{\rm res}}

\newcommand{\restr}[2]{\mathop{\big\lfloor_{{#1}\to {#2}}}}
\newcommand{\set}[1]{\llbracket {#1} \rrbracket}

\newcommand{\OmClass}{{\mathit{\Omega}\!\!\!\!\mathit{\Omega}}}

\DeclarePairedDelimiter\floor{\lfloor}{\rfloor}

\newcommand{\lcoord}{\xi}

\title{Any topological recursion on a rational spectral curve is KP integrable}

\author[A.~Alexandrov]{A.~Alexandrov}
\address{A.~A.: Center for Geometry and Physics, Institute for Basic Science (IBS), Pohang 37673, Korea
}
\email{alex@ibs.re.kr}

\author[B.~Bychkov]{B.~Bychkov}
\address{B.~B.: Department of Mathematics, University of Haifa, Mount Carmel, 3498838, Haifa, Israel}
\email{bbychkov@hse.ru}

\author[P.~Dunin-Barkowski]{P.~Dunin-Barkowski}
\address{P.~D.-B.: Faculty of Mathematics, HSE University, Usacheva 6, 119048 Moscow, Russia; HSE--Skoltech International Laboratory of Representation Theory and Mathematical Physics, Skoltech, Bolshoy Boulevard 30 bld. 1, 121205 Moscow, Russia; and NRC “Kurchatov Institute” -- ITEP, 117218 Moscow, Russia}
\email{ptdunin@hse.ru}

\author[M.~Kazarian]{M.~Kazarian}
\address{M.~K.: Faculty of Mathematics, HSE University, Usacheva 6, 119048 Moscow, Russia; and Igor Krichever Center for Advanced Studies, Skoltech, Bolshoy Boulevard 30 bld. 1, 121205 Moscow, Russia}
\email{kazarian@mccme.ru}

\author[S.~Shadrin]{S.~Shadrin}
\address{S.~S.: Korteweg-de Vries Institute for Mathematics, University of Amsterdam, Postbus 94248, 1090GE Amsterdam, The Netherlands}
\email{S.Shadrin@uva.nl}	

\begin{document}
	
\begin{abstract}
We prove that for any initial data on a genus zero spectral curve the corresponding correlation differentials of topological recursion are KP integrable. As an application we prove KP integrability of partition functions associated via ELSV-type formulas to the $r$-th roots of the twisted powers of the log canonical bundles.
\end{abstract}
	
\maketitle
	
\setcounter{tocdepth}{3}
\tableofcontents

\section{Introduction}

Topological recursion (TR)~\cite{EO-1st} is an explicit recursive procedure that associates to a small input data, the so-called spectral curve data that consists of a Riemann surface, two functions on it, and a bi-differential, a system of the so-called correlation differentials $\{\omega^{(g)}_n\}_{g\geq 0, n\geq 1}$, which are symmetric $n$-differentials on the $n$-th Cartesian power of the underlying Riemann surface. We omit the explicit definition of TR along with its possible variations and ramifications, as it will not be directly used in the present paper and refer the reader to \emph{op.~cit.} as well as further surveys available in~\cite{EO,Eynard-Lecture}.

Topological recursion has proved to be a ubiquitous tool that interconnects combinatorics, enumerative geometry, and integrability, with further applications to knot invariants and free probability, just to name a few. For many results of the principal interest at the interface of these areas, the only available proofs go through topological recursion, the notable examples include e.g.,~\cite{BDKS2,DKPS-qr-ELSV,AS,CLPS}.

A lot of effort is put into establishing the integrability property of the correlation differentials, see~\cite{BorotEynard,ABDKS3,eynard2024hirota}. In particular, in~\cite{ABDKS3} we gave a definition of KP integrability for a family of differentials $\{\omega^{(g)}_n\}$ (not necessarily coming from topological recursion), proved that it requires the curve on which these differentials are defined to be of genus $0$, and used its compatibility with the so-called $x-y$ duality in topological recursion in order to establish KP integrability for a large class of examples, providing, in particular, the explicit way to derive the associated tau functions.

The main goal of this paper is to complete the study of KP integrability for the TR differentials as a property of a family of differentials. We prove that for any TR input data with the spectral curve of genus $0$ the associated differentials are KP integrable.

Note that, while this result is the most general that one might hope to obtain in the framework of our understanding of KP integrability as a global property of a system of differentials, it does not exhibit all interrelations between KP and TR: there are conjectures on integrability of the system of correlation differentials associated to higher genera spectral curves that involve corrections with $\Theta$-functions and is understood as the property of particular expansions of the differentials, see a recent survey in~\cite{eynard2024hirota}.

The rest of the introduction is organized as follows: in Section~\ref{sec:IntroMainTheorem} we proceed to formulate the main result of this paper, and in Section~\ref{sec:IntroBackground} we provide details on the results in~\cite{ABDKS3} briefly mentioned above; they form the necessary background for our main theorem.

\subsection{Main theorem} \label{sec:IntroMainTheorem}

Consider any spectral curve data $(\Sigma,x,y,B)$ such that
\begin{itemize}
\item $\Sigma$ is of genus zero, that is, $\Sigma=\C P^1$;
\item $B$ is the canonical Bergman kernel, that is, $B(z_1,z_2)=\frac{dz_1dz_2}{(z_1-z_2)^2}$, where $z$ is a global affine coordinate on $\Sigma=\C P^1$;
\item $dx$ is a meromorphic differential such that all its zeros are simple; we denote these zeros of $dx$ by $q_1,\dots,q_N$;
\item $y$ is given as a collection of formal power expansions at the points $q_i$ such that $dy(q_i)\ne 0$ for all $i=1,\dots,N$.
\end{itemize}
Note that the notation we use is the same as in the papers we referred to above for the definitions of TR, but we relax the conditions on $x$ and $y$ (which is very natural within the approach of Eynard in~\cite{Eynard-Intersection}). The topological recursion procedure associates with this initial data a system of symmetric $n$-differentials $\omega^{(g)}_n$ on $\Sigma^n$ for all $g\ge0$, $n\ge1$ such that
\begin{itemize}
\item$\omega^{(0)}_1=y\,dx$, $\omega^{(0)}_2=B$. Note that $\omega^{(0)}_1$ is only defined as a formal expansion at the points $\{q_1,\dots, q_N\}$.
\item For $2g-2+n>0$, the forms $\omega^{(g)}_n$ are global meromorphic, i.e.~rational; the only possible poles that they can possess (with respect to each argument) are at the zeros of~$dx$ only. Moreover, the order of poles is at most $2(3g-3+n)+2$ and the principal parts of the poles are odd with respect to the deck transformation of the function~$x$ considered as a local degree two ramified covering.
\end{itemize}
All these properties follow directly from~\cite{Eynard-Intersection}.

The main result of this paper is the following theorem that establishes integrable properties of all systems of correlation differentials obtained by topological recursion on the genus zero spectral curve.

\begin{theorem}\label{th:KP}
The system of TR differentials for the input data as above possesses KP integrability property.
\end{theorem}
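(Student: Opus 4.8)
The plan is to establish Theorem~\ref{th:KP} by producing, for arbitrary genus-zero input data, the fermionic (determinantal) representation of the generating series of the $\omega^{(g)}_n$ that serves as the definition of KP integrability in~\cite{ABDKS3}. Concretely, KP integrability amounts to writing the full system of connected correlators as a vacuum expectation value $\VEV{\,\cdots}$ of a group-like element of the fermionic Fock space, equivalently, to exhibiting a pair of wave functions $\psi,\psi^\dagger$ and a single two-point kernel out of which every $\omega^{(g)}_n$ is recovered as the connected part of a determinant. The whole strategy is to build these wave functions from the spectral data itself and then verify that the determinantal formula closes.

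First I would reduce the case of a purely formal $y$ to that of an honest rational $y$. By the pole bound recalled above, for fixed $(g,n)$ with $2g-2+n>0$ the differential $\omega^{(g)}_n$ depends only on finitely many Taylor coefficients of $y$ at each $q_i$. Hence, replacing the formal expansions of $y$ by a global rational function $\tilde y$ that agrees with $y$ to sufficiently high order at every $q_i$ leaves any prescribed finite collection of the $\omega^{(g)}_n$ unchanged. Since the KP integrability property of~\cite{ABDKS3} is defined by a system of identities each of which involves only finitely many of the $\omega^{(g)}_n$, it suffices to prove the theorem for rational $y$, the formal statement then following by letting the order of agreement tend to infinity.

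The heart of the argument is the construction of the wave functions via the $x$--$y$ swap. I would pass to the dual spectral data $(\Sigma,y,x,B)$ and invoke the compatibility of KP integrability with $x$--$y$ duality proved in~\cite{ABDKS3}: the dual one-point differentials $\omega^{\vee,(g)}_1$ assemble, after the appropriate principal specialization and exponentiation, into the wave function $\psi$, and their conjugates into $\psi^\dagger$. The deck-antisymmetry of the principal parts is used here to split the local expansions at each $q_i$ into their invariant and anti-invariant parts, which is exactly what matches the half-integer mode structure of the free fermions and guarantees that the kernel built from $\psi,\psi^\dagger$ is deck-odd in the correct way. One then checks that the connected determinants of this kernel reproduce the $\omega^{(g)}_n$; for the classes of $y$ treated in~\cite{ABDKS3} this is already known, and the task is to see that the same computation goes through for an arbitrary rational $y$ with simple zeros of $dx$ and $dy(q_i)\neq 0$.

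The main obstacle I anticipate is precisely this last closure step for arbitrary ramification data: for a general $y$ the dual curve can be badly behaved, so one must verify that the dual wave functions are well-defined as formal objects and that the determinantal reconstruction holds without the special pole structure assumed previously. I expect this to require a careful local analysis at each $q_i$ — controlling how the singular part of the kernel at $x(z_1)=x(z_2)$ interacts with the deck transformation — together with a global argument that the only poles produced by the determinantal formula are again at the $q_i$ with the prescribed orders and parities, so that the output lands in the same space of differentials that TR produces. Matching these two systems of differentials, both characterized by their poles and principal parts, is what would complete the proof.
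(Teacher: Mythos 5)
Your overall strategy is genuinely different from the paper's, and its central step contains a gap rather than a mere technical verification. The paper does not attempt to exhibit wave functions or a determinantal kernel for general input data. Instead it proves (Section~\ref{sec:GlobalKPSymmetries}) that the operator $\Omega_n(z^+,z^-;\cdot)$, which in the Fock-space picture corresponds to the operator $\cK(z^+,z^-)\in\widehat{\mathfrak{gl}}(\infty)$, is an infinitesimal KP symmetry of a system of differentials for any specialization of $(z^+,z^-)$ (Theorem~\ref{thm:Delta-omega} and Corollary~\ref{cor:Delta-omega}); it then observes that the variation of the TR differentials under a deformation of $y$, Equation~\eqref{eq:var-of-y}, is exactly of this form by~\eqref{eq:W0}, so KP integrability is preserved along any family of $y$'s with $dy(q_i)\neq0$, and one can flow to $y=z$, where integrability is known from~\cite{ABDKS3}. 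This deformation argument is the new idea of the paper, and nothing playing its role appears in your proposal.

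The gap in your route is precisely the step you flag as ``the main obstacle''. The $x$--$y$ swap compatibility of~\cite{ABDKS3} transports KP integrability from the swapped data $(\Sigma,y,x,B)$ to $(\Sigma,x,y,B)$; it is useful only when the swapped system is already known to be integrable. In the examples treated in~\cite{ABDKS3} this holds because the swapped side is essentially trivial (e.g.\ $dy$ has no zeros, so the dual correlators in the stable range vanish and the kernel is indeed built from one-point data alone). For a general rational $y$ subject only to $dy(q_i)\neq0$, the differential $dy$ has its own zeros, the swapped system is a full-fledged TR system with nontrivial higher correlators, and there is no reason its correlators assemble into a pair of wave functions; establishing its KP integrability is an instance of the very theorem you are trying to prove, so the argument becomes circular. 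The deck-antisymmetry of the principal parts does not rescue this: it constrains the local structure of the $\omega^{(g)}_n$ at the $q_i$ but does not produce the determinantal closure. (Your preliminary reduction from formal to rational $y$ by matching jets at the $q_i$ is reasonable in principle, since each Hirota relation at each order in $\hbar$ involves only finitely many correlators, but it does not touch the main difficulty; note also that the paper's deformation argument handles formal $y$ directly, since~\eqref{eq:var-of-y} only uses the local data at the $q_i$.)
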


The KP integrability as a property of a system of differentials is discussed below, see Definition~\ref{def:KPdiff}.
This theorem admits several generalizations. One of them is already included in the description of the input data above, as we assumed $y$ to be given as a collection of formal power expansions at the points $q_i$ rather than a globally defined function which is the standard setup for topological recursion. But in fact we prove that one can further relax this setup and assume that $dx$ is also defined just in the vicinity of the points $q_i$. The latter level of generality matches exactly the one assumed in~\cite{Eynard-Intersection}.

\begin{remark}
Note that we do not provide a detailed description of tau functions generated through topological recursion, but instead focus on proving the integrability. The general family of tau functions that originates from topological recursion is much more extensive than the specific family with known explicit description, say, in terms of asymptotic description of integrals, constructed in \cite{ABDKS3}.
\end{remark}

\subsection{Background} \label{sec:IntroBackground}
Here we mostly follow \cite{ABDKS3} where the inverse of Theorem \ref{th:KP} is proven, see Theorem \ref{th:KPinv} below.

Let $\Sigma$ be a smooth complex curve that we refer below as the \emph{spectral curve}. Consider a system of $n$-differentials $\omega^{(g)}_n$ on $\Sigma^n$ defined for all $g\ge0$, $n\ge1$. We do not assume that these differentials are related by any recursion. We assume that all $\omega^{(g)}_n$'s are symmetric and meromorphic with no poles on the diagonals for $(g,n)\ne(0,2)$, and $\omega^{(0)}_2$ is also symmetric and meromorphic but it has a second order pole on the diagonal with bi-residue~$1$.
To pass from a system of differentials to KP tau functions, we need the following definition:

\begin{definition}
	A point $o\in\Sigma$ is called \emph{regular} for the system of differentials $\{\omega^{(g)}_n\}$ if $\omega^{(g)}_n-\delta_{(g,n),(0,2)}\frac{d\lcoord_1d\lcoord_2}{(\lcoord_1-\lcoord_2)^2}$ is regular at $(o,\dots,o)\in\Sigma^n$ for all $(g,n)\in\Z_{\geq 0}\times\Z_{>0}$, $(g,n)\not=(0,1)$, where~$\lcoord$ is a local coordinate on $\Sigma$ near~$o$.
\end{definition}

\begin{remark}
	Note that the regularity condition for $\{\omega^{(g)}_n\}$ is independent of a choice of local coordinate. 	Note also that we have no condition for $\omega^{(0)}_1$.
\end{remark}

For a regular point~$o$, and an arbitrary local coordinate~$\lcoord$, the forms $\omega^{(g)}_n$ can be expanded into a series
\begin{equation}\label{eq:omega-expansion}
	\omega_n=\sum_{\substack{g\geq 0\\ 2g-2+n\geq 0}}\hbar^{2g-2+n}\omega^{(g)}_n=\delta_{n,2}\tfrac{d\lcoord_1d\lcoord_2}{(\lcoord_1-\lcoord_2)^2}+
	\sum_{k_1,\dots,k_n=1}^{\infty}f_{k_1,\dots,k_n}\prod_{i=1}^n k_i \lcoord_i^{k_i-1} d\lcoord_i,
\end{equation}
where the coefficients $f_{k_1,\dots,k_n}$ expand as $\sum_{g=0}^\infty \hbar^{2g-2+n} f^{(g)}_{k_1,\dots,k_n}$.
Introduce the associated potential $F=F_{o,\lcoord}$ as
\begin{equation}\label{eq:F-expansion}
	F(p_1,p_2,\dots)\coloneqq\sum_{\substack{g\geq 0, n\geq 1\\ 2g-2+n\geq 0}}\hbar^{2g-2+n}F^{(g)}_{n}=\sum_{n=1}^\infty\frac{1}{n!}
	\sum_{k_1,\dots,k_n=1}^{\infty}f_{k_1,\dots,k_n}{p}_{k_1}\dots {p}_{k_n},
\end{equation}
where $F^{(g)}_n$ are homogeneous formal series of degree~$n$ and ${p}_i$, $i=1,2,\dots$, are formal variables.

This way we associate to a given system of differentials $\omega^{(g)}_n$ a large family of potentials $F_{o,\lcoord}$: the freedom in its definition consists of a choice of a regular point $o$ and the choice of a local coordinate $\lcoord$ at $o$.
We would like to study conditions assuring that
\begin{equation}\label{eq:ZFrel}
	Z_{o,\lcoord}\coloneqq\exp(F_{o,\lcoord})
\end{equation} is a tau function of KP hierarchy.

\begin{theorem}\cite[Theorem 2.3]{ABDKS3}\label{thm:KPglobal} 
If $Z_{o,\lcoord}$ is a tau function of KP hierarchy for some choice of a regular point $o$ and a coordinate $\lcoord$ at this point, then it is a tau function of KP hierarchy for any other choice of a regular point and a local coordinate at that point.
\end{theorem}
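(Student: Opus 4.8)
The plan is to exploit that the differentials $\{\omega^{(g)}_n\}$ are intrinsic objects on $\Sigma$, so that passing from a pair $(o,\lcoord)$ to another pair $(o',\lcoord')$ does not alter them at all; it only alters the rule by which the coefficients $f_{k_1,\dots,k_n}$, and hence the potential $F$, are read off. First I would compute explicitly how $F_{o,\lcoord}$ transforms. Writing the regular parts of the differentials as $\omega^{(g)}_n=d_1\cdots d_n\Phi^{(g)}_n$ for coordinate potentials $\Phi^{(g)}_n(\lcoord_1,\dots,\lcoord_n)=\sum f^{(g)}_{\vec k}\prod_i\lcoord_i^{k_i}$, invariance of the $\omega^{(g)}_n$ forces the new potentials to be obtained from the old ones by substituting the coordinate transition $\lcoord=\lcoord(\lcoord')$ and re-fixing the per-variable integration constants at the new base point. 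Tracking these operations on the generating series should exhibit $Z_{o',\lcoord'}$ as the image of $Z_{o,\lcoord}$ under a composition of three elementary transformations of the formal variables: a linear change $p_k\mapsto\sum_l N_{kl}p_l$, whose matrix $N$ consists of the Taylor coefficients of the powers of $\lcoord(\lcoord')$ (the substitution operator of the coordinate change); a shift $p_k\mapsto p_k+c_k$, arising because, after moving the base point, correlators with several arguments placed at $o'$ feed into correlators of fewer arguments; and multiplication by the exponential of a quadratic form $Q$ in the $p_k$, arising from the dependence on $(o,\lcoord)$ of the regular part of $\omega^{(0)}_2=B$ near the diagonal (the Bergman-kernel anomaly).

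The crux is then to recognize that each of these three transformations lies in the group $\widehat{GL}(\infty)$ of symmetries of the KP hierarchy, under whose bosonic action the set of KP tau functions — being exactly the orbit of the vacuum $\vac$ — is preserved. The shift $p\mapsto p+c$ belongs to the Heisenberg (abelian current) subgroup and manifestly maps tau functions to tau functions, since solutions of KP are preserved under translation of the times. The Gaussian factor $e^{Q}$ is the standard central/metaplectic correction and is likewise admissible. The subtle piece is the linear change $N$: a generic linear substitution of the times destroys KP integrability, so it is essential that our $N$ is not arbitrary but is precisely the substitution operator attached to a biholomorphic change of the spectral coordinate $\lcoord\mapsto\lcoord(\lcoord')$. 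Such operators are exactly the geometric ($W_{1+\infty}$-type) symmetries that integrate to elements of $\widehat{GL}(\infty)$, and hence preserve tau functions. Equivalently, in the Sato-Grassmannian picture one checks that the point $W\subset\C((\lcoord))$ encoding $Z_{o,\lcoord}$ is carried to the point encoding $Z_{o',\lcoord'}$ by an invertible linear operator on $\C((\lcoord))$ (coordinate substitution composed with translation), so that decomposability, and thus the Plücker/Hirota relations, is preserved. As $\widehat{GL}(\infty)$ is a group, $Z_{o',\lcoord'}$ is a tau function whenever $Z_{o,\lcoord}$ is, which is the assertion.

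I expect the main obstacle to be making the last identification rigorous rather than merely formal, and two points will need care. First, the base-point change mixes correlators of different $n$, and I must verify that the induced operation on $F$ is genuinely a shift by an admissible vector $c$ together with the substitution $N$, with all the infinite matrices lying in the completion of $\widehat{GL}(\infty)$ that actually preserves the tau-function property — not merely in its formal Lie algebra; here the global, rather than infinitesimal, nature of moving the base point on $\C P^1$ must be controlled. Second, I must pin down the quadratic anomaly $Q$ exactly and confirm it is the precise cocycle term required for the geometric substitution to lift from the single-particle space to a genuine Fock-space operator, since an error in $Q$ would spoil the Hirota equations. Once these are settled, the reduction to the known invariance of the KP orbit under the geometric $\widehat{GL}(\infty)$-symmetries completes the argument, the remaining work being bookkeeping order by order in $\hbar$.
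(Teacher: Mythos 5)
First, a point of order: the present paper does not prove this statement at all --- it is imported verbatim as \cite[Theorem 2.3]{ABDKS3}, and the only hint the paper gives about the mechanism is the remark that a change of the local spectral parameter acts by an element of the Virasoro subgroup of $\widehat{GL}_\infty$ (citing Shiota). The proof in the cited source does not construct an explicit group element acting on the times at all: it reformulates KP integrability as a system of identities (determinantal formulas) for the globally defined objects $\Omega_n$ built from the $\omega^{(g)}_n$ --- the same objects that appear in Section~\ref{sec:GlobalKPSymmetries} here --- which make no reference to the expansion point $o$ or the coordinate $\lcoord$; independence of $(o,\lcoord)$ is then immediate. Your route is therefore genuinely different, and it has two concrete problems.

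The first is the ``shift $p_k\mapsto p_k+c_k$'' component of your decomposition: it does not occur. Re-expanding the \emph{same} differentials at a new base point preserves the number of arguments $n$, hence the degree in the $p$-variables; writing $\lcoord=\lcoord'+a$ one finds $d(\lcoord^k)=\sum_{j=1}^{k}\binom{k}{j}a^{k-j}\,d(\lcoord'^{\,j})$, so the whole base-point move is again a linear substitution of the times (with new coefficients given by \emph{infinite} sums of old ones), not a translation of the times. The operation you describe --- ``correlators with several arguments placed at $o'$ feed into correlators of fewer arguments'' --- is the action of $e^{\sum_k c_kJ_k}$, which contracts $\omega_{n+m}$ down to $\omega_n$ against a fixed $1$-form and thereby \emph{changes} the system of differentials; it is not a re-reading of the same system. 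The second, and more serious, problem is that the step you yourself flag as ``the main obstacle'' is the entire content of the theorem. The substitution matrices attached to a coordinate change or a base-point move are not elements of $\widehat{GL}_\infty$ in the standard (finitely many diagonals) sense, and a generic completed linear substitution of the times destroys the Hirota equations. Your justification at this point --- that these particular substitutions ``are exactly the geometric symmetries that integrate to elements of $\widehat{GL}_\infty$ and hence preserve tau functions'' --- is an assertion of the statement to be proved, not an argument. Until you either (a) prove convergence and tau-preservation for these specific completed operators (Shiota's theorem covers the coordinate change at a fixed point, but you would still need to handle the global base-point move), or (b) switch to a coordinate-free characterization of KP integrability as in \cite{ABDKS3}, the proof is not complete.
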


This allows us to give the following definition, which introduces the KP integrability as an internal property of the system of differentials:

\begin{definition} \label{def:KPdiff} A system of differentials $\{\omega^{(g)}_n\}$ satisfying the assumptions as above is called KP-integrable, if the associated partition function $Z_{o,\lcoord}$ is a KP tau function for at least some choice of a regular point $o$ and a local coordinate $\lcoord$.
\end{definition}

\begin{remark} \label{rem:01irrelevant}
	The choice of $\omega^{(0)}_1$ does not affect the KP integrability property, it can be chosen in an arbitrary way. For definiteness, in all expressions which contain a summation over genus we do not include the terms with $(g,n)=(0,1)$.
\end{remark}

\begin{remark}
	A change of the local spectral parameter is known to provide a symmetry of the KP hierarchy \cite{Shiota}, given by an element of the Virasoro subgroup of $\widehat{GL}_\infty$.
\end{remark}


\begin{theorem}\cite[Theorem 3.8]{ABDKS3}\label{th:KPinv}
If the differentials $\{\omega^{(g)}_n\}$ produced by TR are KP integrable, then the spectral curve $\Sigma$ is rational.
\end{theorem}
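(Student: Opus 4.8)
The plan is to reduce KP integrability to a constraint on the genus-zero two-point differential $\omega^{(0)}_2$ alone, and then to show that this constraint is incompatible with $\Sigma$ having positive genus. Assume $Z_{o,\lcoord}=\exp(F_{o,\lcoord})$ is a KP tau function for some regular point $o$; by Theorem~\ref{thm:KPglobal} it is then a tau function for every regular point, so every relation extracted below holds at every regular $o\in\Sigma$. A KP tau function satisfies the Hirota bilinear equations, and I would expand these in the genus-counting parameter $\hbar$. Since $\partial_{p_i}\partial_{p_j}F$ starts at order $\hbar^0$ (the contribution of $(g,n)=(0,2)$), while the four-derivative term $\partial_{p_1}^4F$ and the genus-one two-point contribution both start at order $\hbar^2$, the leading $\hbar^0$-part of the first Hirota equation is the dispersionless KP equation, a closed quadratic relation among the Taylor coefficients $f^{(0)}_{k,l}$ of $\omega^{(0)}_2$ at $(o,o)$. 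Running over all Hirota equations, KP integrability forces $\omega^{(0)}_2$ to satisfy the whole dispersionless KP hierarchy at every point of $\Sigma$.

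The second step is to use the classical solution of the dispersionless Hirota relations for the two-point function: it is standard that they hold if and only if there is a locally defined function $p=p(\lcoord)$ near $o$, with $dp(o)\neq0$, such that
\[
\omega^{(0)}_2(\lcoord_1,\lcoord_2)=\frac{dp(\lcoord_1)\,dp(\lcoord_2)}{\big(p(\lcoord_1)-p(\lcoord_2)\big)^2}.
\]
In other words, dispersionless KP is equivalent to the assertion that in the coordinate $p$ the two-point differential acquires the standard rational shape near the diagonal point $(o,o)$. Fixing $\lcoord_1=o$ and integrating in $\lcoord_2$, this is the same as saying that the function $G_o(\lcoord)\coloneqq\big(p(\lcoord)-p(o)\big)^{-1}$ satisfies $\omega^{(0)}_2(o,\cdot)=dp(o)\,dG_o$, where $\omega^{(0)}_2(o,\cdot)$ is a global second-kind differential on $\Sigma$ with a single double pole at $o$.

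The third step is to globalize. If $G_o$ (equivalently $p$) is single-valued on all of $\Sigma$, then it is a global meromorphic function with a single simple pole at $o$, hence of degree one, and therefore $\Sigma\cong\C P^1$. Single-valuedness of $G_o$ is equivalent to the vanishing of all periods $\oint_\gamma\omega^{(0)}_2(o,\cdot)$. Here the genus enters decisively: for a bidifferential with double pole on the diagonal of biresidue $1$ and no other poles, writing $\omega^{(0)}_2=B+\sum_{i,j}c_{ij}\omega_i\omega_j$ with $B$ the normalized Bergman kernel and $\omega_i$ the normalized holomorphic differentials, the vanishing of the $a$-periods forces $c_{ij}=0$, and then the $b$-periods equal $2\pi i\,\omega_k(o)\neq0$ for genus $\ge1$. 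Thus single-valuedness of $G_o$ is possible only when $\Sigma$ has genus zero.

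The main obstacle, and the step requiring real work, is to establish that $G_o$ is genuinely single-valued, since the dispersionless relations a priori determine $p$ only as a local germ that could carry nontrivial (Möbius) monodromy around the cycles of $\Sigma$. To close this gap I would bring in the remaining hypothesis that has not yet been used, namely that the differentials come from topological recursion: all $\omega^{(g)}_n$ with $(g,n)\neq(0,1),(0,2)$ are \emph{globally} meromorphic on $\Sigma$ with poles only at the ramification points. I would feed this globality into the higher dispersionless KP relations, which express the $\omega^{(0)}_n$, and hence the monodromy data of $p$, through $\omega^{(0)}_2$, in order to force the $b$-periods of $\omega^{(0)}_2(o,\cdot)$ to vanish, equivalently to show that the projective structure defined by $\omega^{(0)}_2$ has trivial monodromy. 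Once this is achieved the degree-one argument applies and $\Sigma$ must be rational.
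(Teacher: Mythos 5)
First, a point of order: this paper does not prove Theorem~\ref{th:KPinv} at all --- it is imported verbatim from \cite{ABDKS3}, so the only proof to compare against is the one in that reference. Your opening moves are correct and do match the known strategy: because the Hirota equations hold identically in $\hbar$ and, in the grading $\hbar^{2g-2+n}$, only $F^{(0)}_2$ survives at order $\hbar^0$, the Gaussian $\exp(F^{(0)}_2)$ must itself be a tau function, and Gaussian tau functions are characterized by $\omega^{(0)}_2=\frac{dp(z_1)\,dp(z_2)}{(p(z_1)-p(z_2))^2}$ for some local coordinate change $p$ at the expansion point. Your endgame is also fine: a single-valued $p$ yields a degree-one meromorphic function, forcing $\Sigma\cong\C P^1$, while for $g\ge 1$ the periods of $\omega^{(0)}_2(o,\cdot)$ cannot all vanish (the $b_k$-periods of the normalized kernel are $2\pi i\,\omega_k(o)$ and the canonical system is base-point free).

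The genuine gap is precisely the step you flag and postpone: excluding nontrivial M\"obius monodromy of $p$. This is not a technicality but the entire content of the theorem. Every Riemann surface carries projective structures, so a local coordinate $p$ with prescribed Schwarzian, defined up to M\"obius monodromy around cycles, always exists in any genus; what must be excluded is that the globally meromorphic $\omega^{(0)}_2$ coincides with the M\"obius-invariant kernel of such a multivalued $p$. Your proposed repair --- feeding the globality of the higher $\omega^{(0)}_n$ into ``higher dispersionless KP relations'' --- is not a workable mechanism as stated: in the present grading the order-$\hbar^{n-2}$ Hirota constraints mix $F^{(0)}_n$ with lower-genus, fewer-point terms, and, more fundamentally, any relation among germs of differentials at a single regular point is blind to periods around cycles, so it cannot force the $b$-periods of $\omega^{(0)}_2(o,\cdot)$ to vanish. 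The proof in \cite{ABDKS3} gets globality from a different source: the Hirota bilinear identity is exploited as a statement about the kernel $\Omega_0(z^+,z^-)$ at several \emph{widely separated} points of $\Sigma$, whose $\hbar^0$ part is a Pl\"ucker/cross-ratio identity for the exponentiated regularized double integral of $\omega^{(0)}_2$; it is this multi-point identity (which fails for the Bergman kernel in genus $\ge 1$, where it acquires theta-function corrections \`a la Fay) that produces a single globally defined $p$ with $\omega^{(0)}_2=\frac{dp_1\,dp_2}{(p_1-p_2)^2}$ and thereby kills the monodromy. As written, your argument establishes only the local statement, and the theorem remains unproved.
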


{We refer the reader to \cite[Section 5]{ABDKS3} for the proof. In a nutshell, the analysis of the KP integrability in the semi-classical limit performed by Takasaki--Takebe in~\cite{TakTak} implies restrictions on the Bergman kernel. Namely, assume that we have a possibly non-compact Riemann surface $C$ and $\omega^{(0)}_2=B$ is a symmetric bi-differential with the only pole being the double pole along the diagonal. Then it is required that there exists a meromorphic function $z\colon C\to \mathbb{C}P^1$ that establishes the embedding of $C$ into $\mathbb{C}P^1$ such that $B$ is the pull-back of the standard Bergman kernel on $\mathbb{C}P^1$. Thus $C$ is either isomorphic to an open domain in $\mathbb{C}P^1$ or just to the whole $\mathbb{C}P^1$, and $B$ must be the restriction of the standard Bergman kernel defined on  $\mathbb{C}P^1$. Note that this argument does not involve any properties of $dx$ or $y$ or other differentials, and in particular, we do not really need an assumption that the differentials $\{\omega^{(g)}_n\}$ are produced by TR.} 

The upshot of Theorems~\ref{th:KP} and~\ref{th:KPinv} can be then formulated as follows:
\begin{corollary} A system of differentials $\{\omega^{(g)}_n\}$ produced by topological recursion  is KP integrable if and only if the spectral curve is rational.
\end{corollary}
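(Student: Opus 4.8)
The statement to be proved is an equivalence, so my first move is to split it into its two implications. The direction ``KP integrable $\Rightarrow$ rational'' is exactly Theorem~\ref{th:KPinv}, which I may assume as already established. Hence the entire content of the corollary reduces to the converse direction, which is precisely the assertion of Theorem~\ref{th:KP}: for any genus-zero input data $(\Sigma,x,y,B)$ as specified, the system $\{\omega^{(g)}_n\}$ is KP integrable. The remainder of the plan is devoted to this implication.

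My strategy rests on Theorem~\ref{thm:KPglobal}: since KP integrability is independent of the chosen regular point $o$ and local coordinate $\lcoord$, I am free to pick whichever choice makes the bilinear (Hirota) identity most transparent. I would first record the structural constraints coming from genus-zero topological recursion: for $2g-2+n>0$ each $\omega^{(g)}_n$ is global rational on $\Sigma=\C P^1$ with poles only at the simple zeros $q_1,\dots,q_N$ of $dx$, of bounded order, and with principal parts odd under the local deck transformation of $x$. Consequently the whole system is determined by finitely much local data---the expansion coefficients of $x$ and $y$ at the $q_i$---and the potential $F_{o,\lcoord}$ of~\eqref{eq:F-expansion} is assembled from these local contributions through the global rational structure of the curve.

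The engine of the proof is the $x-y$ duality of~\cite{ABDKS3}, whose compatibility with KP integrability is already established there: if one side of the swap generates a KP tau function, so does the other, with an explicit relation between the two tau functions. The plan is therefore to connect the arbitrary data $(\Sigma,x,y,B)$ to a base case whose integrability is manifest, rewriting the $n$-point generating functions of the system in a form to which the swap applies. The verification would then be reduced to an essentially determinantal statement about the connected two-point function: that its kernel has the shape forcing all higher connected correlators to be the corresponding free-fermionic (Plücker) combinations. Checking the Hirota bilinear identity in the variables $p_k$ then amounts to showing that the wave functions attached to $\{\omega^{(g)}_n\}$ span a point of the Sato Grassmannian.

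The main obstacle is the full generality of $y$. In~\cite{ABDKS3} integrability was obtained only for restricted classes, where $x$ or $y$ carries a rigid global form; here $y$ is an arbitrary collection of formal expansions and, in the generalized setup, even $dx$ need only be defined near the $q_i$. The hard step will be to show that passing to arbitrary $y$ preserves the tau-function property---equivalently, that the dependence of the correlators on the higher coefficients of $y$ is implemented by an admissible element of the Virasoro, hence $\widehat{GL}_\infty$, group acting on the fermionic vacuum, in the spirit of the change-of-spectral-parameter remark following Definition~\ref{def:KPdiff}. Proving that this dressing stays inside $\widehat{GL}_\infty$, rather than leaving the KP orbit, is where I expect the genuine difficulty to lie, and it is exactly what separates the general genus-zero statement from the previously known special cases.
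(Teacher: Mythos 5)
Your first paragraph is exactly the paper's proof of this corollary: it is presented there as the immediate conjunction of Theorem~\ref{th:KP} (rational $\Rightarrow$ KP integrable) and Theorem~\ref{th:KPinv} (KP integrable $\Rightarrow$ rational), with nothing further to check. Since Theorem~\ref{th:KP} is stated in the paper before the corollary, citing it is legitimate, and you could simply stop after that reduction.

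The remainder of your proposal is really an attempt to prove Theorem~\ref{th:KP} itself, and there it both diverges from the paper and fails to close. You route the argument through the $x-y$ duality and a direct Hirota/Grassmannian verification, and you correctly isolate the crux --- that varying $y$ through arbitrary formal expansions at the $q_i$ must act by an admissible element of $\widehat{GL}_\infty$ --- but you explicitly leave that step unresolved (``where I expect the genuine difficulty to lie''). That step is the entire content of the paper's argument, and it is handled by a different mechanism: Section~\ref{sec:GlobalKPSymmetries} constructs the differentials $\Omega_n(z^+,z^-;z_{\set n})$ and proves (Theorem~\ref{thm:Delta-omega}) that $\Delta\omega_n=\Omega_n$ is an infinitesimal KP symmetry for any specialization of $(z^+,z^-)$, because in the Fock-space picture it is implemented by the operator $\cK(z^+,z^-)\in\widehat{\mathfrak{gl}}(\infty)$ and Theorem~\ref{thm:KPglobal} globalizes the local computation. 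Section~\ref{sec:ProofOfTheorem} then combines this with the variational formula~\eqref{eq:var-of-y} for the dependence of the TR differentials on $y$: its right-hand side is built from $\cW^{(g),0}_n=\omega^{(g)}_{n+1}$, a specialization of $\Omega_n$, hence KP integrability is preserved along any deformation of $y$ with $dy(q_i)\neq0$, and one deforms to the base case $y=z$ already settled in~\cite{ABDKS3}. The $x-y$ duality, which was the engine in~\cite{ABDKS3}, plays no role here; without the deformation formula and Theorem~\ref{thm:Delta-omega}, your sketch does not constitute a proof of the forward implication.
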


\subsection{Applications} Theorem~\ref{th:KP} immediately has a variety of applications. Namely, for any input data of topological recursion with a rational spectral curve we can immediately state the KP integrability of the resulting system of differentials and the partition functions associated to their expansions. In order to illustrate this principle, we analyze the spectral curve $x=\log z-z^r$, $y=z^s$, $z\in \C P^1$, $r,s\in \Z_{>0}$, associated in~\cite{LPSZ} to the Chern classes of the twisted $r$-th roots of the tensor powers of the log canonical bundle, the so-called $\OmClass$-classes, or, in other terminology, the Chiodo classes. 

In some cases the KP integrability of the associated partition functions was known --- for some special choices of $r$, $s$, and local expansions one obtains instances of hypergeometric KP tau functions~\cite{DKPS-qr-ELSV,BDKS2} associated to the so-called orbifold $r/s$-spin Hurwitz numbers (the ratio $r/s$ is required to be integer). In this paper we generalize these results to arbitrary positive integers $r$ and $s$, see Corollary~\ref{cor:OmegaKP}.

\subsection{Organization of the paper} In Section~\ref{sec:GlobalKPSymmetries} we develop the theory of KP symmetries for the notion of KP integrability considered as a global property of a system of differentials.
In Section~\ref{sec:ProofOfTheorem} we use the theory of KP symmetries in combination to the deformation formulas for the differentials of topological recursion in order to reduce the KP integrability of a given system of TR differentials to the know instances of KP integrability proved in~\cite{ABDKS3}, which establishes our main theorem, Theorem~\ref{th:KP}. Section~\ref{sec:Generalized} is devoted to a generalized setup where we no longer assume that function $x$ is globally defined. Finally, in Section~\ref{sec:ApplicationRRoots} we provide an application to the KP integrability of the $\OmClass$-classes. 

In Appendix~\ref{sec:varTR} we provide some background on the deformation formulas for the differentials obtained by topological recursion.

\subsection{Notation} Throughout the text we use the following notation:
\begin{itemize}
	\item $\set{n}$ denotes $\{1,\dots,n\}$.
	\item $z_I$ denotes $\{z_i\}_{i\in I}$ for $I\subseteq \set{n}$.
	\item $[u^d]$ denotes the operator that extracts the corresponding coefficient from the whole expression to the right of it, that is, $[u^d]\sum_{i=-\infty}^\infty a_iu^i \coloneqq a_d$.
	\item
	$\restr{u}{v}$ denotes the operator of substitution (or restriction), that is, $\restr{u}{v} f(u) \coloneqq f(v)$.
\end{itemize}

\subsection{Acknowledgments} A. A. was supported by the Institute for Basic Science (IBS-R003-D1). B.B. was supported by the ISF Grant 876/20. B.B., P.D.-B., and M.K. were supported by the Russian Science Foundation (grant No. 24-11-00366). S.~S. was supported by the Netherlands Organization for Scientific Research.

{We thank the referees for useful remarks.}
 
\section{Global KP symmetries} \label{sec:GlobalKPSymmetries}

For a given system of meromorphic symmetric differentials $\omega^{(g)}_n$ and a function~$x$ such that $dx$ is meromorphic and $\omega^{(g)}_n-\delta_{(g,n),(0,2)}\frac{dx_1dx_2}{(x_1-x_2)^2}$ is regular on the diagonals we set
\begin{equation} \label{eq:omega-n}
\omega_n=\sum_{\substack{g\geq 0\\ 2g-2+n\geq 0}}\hbar^{2g-2+n}\omega^{(g)}_n
\end{equation}
and define
\begin{align}
\omega^{(0),{\rm sing}}_2(z_1,z_2)&=\frac{dx(z_1)\,dx(z_2)}{(x(z_1)-x(z_2))^2},
\\\cT_n(z^+,z^-;z_{\set n})&=\sum_{k=1}^\infty\frac1{k!}
\int\limits_{z^-}^{z^+}\!\!\dots\!\!\int\limits_{z^-}^{z^+}
\bigl(\omega_{n+k}(\underbrace{\cdot,\dots,\cdot}_k,z_{\set n})-\delta_{(n,k),(0,2)}\omega^{(0),{\rm sing}}_2\bigr),
\\
\label{eq:Omega}
\Omega_n(z^+,z^-;z_{\set n})&=\sqrt{\omega^{(0),{\rm sing}}_2(z^+,z^-)}\;
e^{\cT_0(z^+,z^-)}\sum_{\set n=\bigsqcup\limits_\alpha J_\alpha,~J_\alpha\ne\emptyset}
\prod_\alpha \cT_{|J_\alpha|}(z^+,z^-;z_{J_\alpha}),
\\\label{eq:cW}
\cW_n(x,u;z_{\set n})&=\restr{z^{\pm}}{e^{\pm \frac{u\hbar}{2}\partial_{x}}z}\Omega_n(z^+,z^-;z_{\set n}).
\end{align}

\begin{remark}\label{rem:bW-polynomiality}
The arguments $(z^+,z^-)$ of $\Omega_{n}$ are treated as a point of the Cartesian square $\Sigma^2$ and are considered in a certain vicinity of the diagonal $z^+=z^-$. The arguments $(z,u)$ of $\cW_n$ provide a particular parametrization of this vicinity through the substitutions
\begin{equation}
z^{+}=e^{\frac{u\hbar}{2}\partial_{x}}z,\qquad
z^{-}=e^{-\frac{u\hbar}{2}\partial_{x}}z.
\end{equation}
\end{remark}

\begin{remark}
The half-differentials $\sqrt{dx(z^\pm)}$ entering $\Omega_n$ are understood formally. They attain a more definite meaning only after the above substitution. Namely, we have
\begin{equation}
\restr{z^{\pm}}{e^{\pm \frac{u\hbar}{2}\partial_{x}}z} x^\pm=x\pm\frac{u\hbar}{2},
\qquad\restr{z^{\pm}}{e^{\pm \frac{u\hbar}{2}\partial_{x}}z}
\sqrt{\omega^{(0),{\rm sing}}_2(z^+,z^-)}=\frac{dx}{u\hbar}.
\end{equation}
A reason to introduce $\Omega_n$ in this form is that they are actually independent of $x$ and are uniquely determined by the differentials $\omega^{(g)}_n$. Indeed, a different choice of the $x$-function changes both the factor $\sqrt{\omega^{(0),{\rm sing}}_2(z^+,z^-)}=\frac{\sqrt{dx^+dx^-}}{x^+-x^-}$ and the regularization term $\omega^{(0),{\rm sing}}_2$ of the singular form $\omega^{(0)}_2$ under the integral in the definition of $\cT_0$. An easy computation shows that these two changes compensate one another, see~\cite[Equation (3.10)]{ABDKS3}.
\end{remark}

\begin{remark}\label{rmk3.5} We regard $\cW_n$ as an $(n+1)$-differential on $\Sigma^{n+1}$ depending on an additional parameter $u$.
One can see that in the definition of  $\cW_n$ the integration can be interpreted as an (infinite)  combination of the differentials $\omega^{(\tilde g)}_{\tilde n}$ and their derivatives. This fact follows from the equality
\begin{equation}
\restr{\tilde z^{\pm}}{e^{\pm \frac{u\hbar}{2}\partial_{x}}z}
\;\int\limits_{z^-}^{z^+}\!\!\dots\!\!\int\limits_{z^-}^{z^+}
\omega_{n+k}(\underbrace{\cdot,\dots,\cdot}_k,z_{\set n})
=\prod_{i=1}^{k}\bigl(\restr{\tilde z_i}{z} u\hbar\,\cS(u\hbar\,\partial_{\tilde x_i})\bigr)\;
\frac{\omega_{n+k}(\tilde z_{\set{k}},z_{\set n})}
{\prod_{i=1}^{k} d\tilde x_i},
\end{equation}
where $\cS(u)=u^{-1}(e^{u/2} - e^{-u/2})$.
As a corollary, the coefficient of each monomial in~$\hbar$ and~$u$ in $\cW_n$ is a differential polynomial in $\omega$'s. In particular, let us set $\cW^{(g),r}_{n}\coloneqq[u^{r}\hbar^{2g-1+n}]\cW_n$. Then, we have
\begin{align}\label{eq:W0}
\cW^{(g),0}_n(z;z_{\set n})&=\omega^{(g)}_{n+1}(z,z_{\set n}),
\\\label{eq:W1}
\cW^{(g),1}_n(z;z_{\set n})&=\frac{1}{2\,dx(z)}
\Bigl(\omega^{(g-1)}_{n+2}(z,z,z_{\set n})+
 \!\!\!\sum_{\substack{g_1+g_2=g\\I_1\sqcup I_2=\set{n}\\ 2g_i-1+|I_i| \geq 0 }}\!\!\!\omega^{(g_1)}_{|I_1|+1}(z,z_{I_1})\,\omega^{(g_2)}_{|I_2|+1}(z,z_{I_2})
 \Bigr).
\end{align}
\end{remark}

\begin{remark} Note that in~\eqref{eq:omega-n} we used our convention that the sum over genus excludes the $(g,n)=(0,1)$ term, which affects~\eqref{eq:Omega} and~\eqref{eq:cW}. In some other situations, for instance, in the study of the loop equations and/or deformations of the differentials, these terms should not be excluded. In particular, below we will also need a version of $\cW^{(g),1}_n$ with the $(0,1)$ terms included, which denote by
	\begin{align}\label{eq:tildeW1}
		\widetilde{\cW}^{(g),1}_n(z;z_{\set n})&=\frac{1}{2\,dx(z)}
		\Bigl(\omega^{(g-1)}_{n+2}(z,z,z_{\set n})+
		\!\!\!\sum_{\substack{g_1+g_2=g\\I_1\sqcup I_2=\set{n} }}\!\!\!\omega^{(g_1)}_{|I_1|+1}(z,z_{I_1})\,\omega^{(g_2)}_{|I_2|+1}(z,z_{I_2})
		\Bigr).
	\end{align}
\end{remark}

\begin{theorem}\label{thm:Delta-omega}
For $n\geq 1$ the transformation
\begin{equation}\label{eq:Delta-omega1}
\Delta\omega_n(z_{\set n})=\Omega_n(z^+,z^-;z_{\set n})
\end{equation}
is an infinitesimal KP symmetry for a system of differentials $\{\omega^{(g)}_n\}$ for any specialization of $z^+$ and $z^-$. In other words, the above formula provides a family of infinitesimal KP symmetries with $(z^+,z^-)\in\Sigma^2$ regarded as parameters of the family.
\end{theorem}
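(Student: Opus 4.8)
The plan is to realize the transformation $\Delta\omega_n=\Omega_n$ as the bosonic image of a fermionic bilinear and then to invoke the standard theory underlying the boson--fermion correspondence. In the free-fermion description of the KP hierarchy, the tau functions form the $GL_\infty$-orbit of the vacuum, and the Lie algebra $\widehat{\mathfrak{gl}}_\infty$ of its infinitesimal symmetries is topologically spanned by the normally ordered bilinears $:\!\psi(z^+)\psi^*(z^-)\!:$, with $z^+,z^-$ serving as the two expansion parameters. Thus it suffices to show that the transformation $\Delta F$ of the potential $F=F_{o,\lcoord}$ induced by $\Delta\omega_n=\Omega_n$ through~\eqref{eq:F-expansion} is exactly $Z^{-1}E(z^+,z^-)Z$, where $E(z^+,z^-)$ is the bosonization of this bilinear. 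Since $E(z^+,z^-)\in\widehat{\mathfrak{gl}}_\infty$ for every value of the parameters, this identification yields at once that $\Delta\omega_n=\Omega_n$ is an infinitesimal KP symmetry for an arbitrary specialization of $z^\pm$, which is the assertion.

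First I would fix the dictionary between the differentials and the bosonic Fock space $\C[[p_1,p_2,\dots]]$. Under~\eqref{eq:omega-expansion}--\eqref{eq:F-expansion}, the shift operators $\Gamma_-(\zeta)^{\pm1}=\exp\!\big(\mp\sum_k\tfrac{\hbar}{k}\zeta^{-k}\partial_{p_k}\big)$ act on $F$ by substituting $p_k\mapsto p_k\mp\frac{\hbar}{k}\zeta^{-k}$, and, by the very definition of the coefficients $f_{k_1,\dots,k_n}$, the resulting increment of $F$ is an iterated primitive of the differentials $\omega_{n+k}$. The key computation is then to check that the connected block $\cT_n(z^+,z^-;z_{\set n})$ coincides with the $n$-point part of the connected quantity $\Gamma_-(z^+)\Gamma_-(z^-)^{-1}F-F$; this is precisely what the iterated integral $\int_{z^-}^{z^+}\!\cdots\!\int_{z^-}^{z^+}$ in the definition of $\cT_n$ computes, the passage to honest differential polynomials in the $\omega$'s being supplied by Remark~\ref{rmk3.5} and~\eqref{eq:W1}.

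Next I would account for the passage from connected to full correlators and for the contraction factor. The sum over set partitions $\set n=\bigsqcup_\alpha J_\alpha$ weighted by $\prod_\alpha\cT_{|J_\alpha|}$, together with the scalar $e^{\cT_0(z^+,z^-)}$, is exactly the exponential formula turning the connected data $\{\cT_{\geq 0}\}$ into the ratio $Z^{-1}\big(\Gamma_-(z^+)\Gamma_-(z^-)^{-1}\big)Z$, the factor $e^{\cT_0}$ contributing the zero-point (vacuum) term. It then remains to match the prefactor and the regularizations with the normal ordering of the two fermions: the half-differential $\sqrt{\omega^{(0),{\rm sing}}_2(z^+,z^-)}$ is to be identified with the two-point contraction $\langle\psi(z^+)\psi^*(z^-)\rangle$, while the subtraction of $\omega^{(0),{\rm sing}}_2$ inside $\cT_0$ (the index $\delta_{(n,k),(0,2)}$) removes precisely the corresponding self-contraction of the singular part of $\omega^{(0)}_2$. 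Here one uses the reparametrization $z^\pm=e^{\pm\frac{u\hbar}{2}\partial_x}z$ of Remark~\ref{rem:bW-polynomiality}, i.e.\ the transition from $\Omega_n$ to $\cW_n$ in~\eqref{eq:cW}: it is this substitution that makes the formal half-differentials meaningful (sending $\sqrt{\omega^{(0),{\rm sing}}_2}\mapsto \frac{dx}{u\hbar}$) and renders the diagonal expansion near $z^+=z^-$, governed by $\cS$, manifestly polynomial.

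The main obstacle I anticipate is exactly this final matching of normalizations across the boson--fermion correspondence: tracking the contraction factor, the $\delta_{(n,k),(0,2)}$ regularization, and the powers of $\hbar$ and the signs so that $\Omega_n$ agrees on the nose with the $n$-point part of $Z^{-1}E(z^+,z^-)Z$, rather than merely up to structure. A subsidiary but essential point is to confirm that the whole construction is intrinsic to $\{\omega^{(g)}_n\}$, i.e.\ independent of the auxiliary function $x$ used in the bosonization: as recorded in the remarks following~\eqref{eq:cW}, the $x$-dependence of the prefactor $\sqrt{\omega^{(0),{\rm sing}}_2}$ cancels against that of the regularization term inside $\cT_0$, and verifying this cancellation with the present normalization is what certifies that $E(z^+,z^-)$ is a genuine, $x$-independent element of $\widehat{\mathfrak{gl}}_\infty$, completing the proof.
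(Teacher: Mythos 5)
Your strategy is the same as the paper's: identify $\Omega_n$ with the insertion of the bosonized fermionic bilinear into the correlators of $\hat Z_{o,\lcoord}$ and conclude from membership in $\widehat{\mathfrak{gl}}(\infty)$. Concretely, the paper writes
\begin{equation*}
\Omega_{n}(z^+,z^-;z_{\set n})-\delta_{n,0}\tfrac{\sqrt{dz^+\,dz^-}}{z^+-z^-}=\VEVc{J(z_1)\dots J(z_n)\,\cK(z^+,z^-)\,\hat Z_{o,z}}\sqrt{dz^+dz^-}\prod_{i=1}^n\tfrac{dz_i}{z_i},
\end{equation*}
with $\cK(z_1,z_2)=\bigl(e^{\sum_{k<0}\frac{z_1^k-z_2^k}{k}J_k}e^{\sum_{k>0}\frac{z_1^k-z_2^k}{k}J_k}-1\bigr)/(z_1-z_2)$, which is exactly your $E(z^+,z^-)$; your exponential-formula bookkeeping of $\cT_0$, the set partitions, and the contraction prefactor is the content of that identity, and the $x$-independence you flag at the end is dealt with in the paper by a remark preceding the theorem rather than inside the proof.

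There is, however, one genuine gap. Your argument, as stated, only establishes the symmetry property for specializations of $z^\pm$ lying in the domain of the power expansion at the chosen regular point $o$ --- that is where the identification of $\Omega_n$ with a connected vacuum expectation value is literally valid. The theorem claims the property for \emph{any} specialization, including residues at points of $\Sigma$ where the differentials have poles; this is precisely what is used later, when residues at the ramification points $q_i$ are taken in~\eqref{eq:var-of-y} and~\eqref{eq:omega-B-variation}. The paper closes this gap by invoking Theorem~\ref{thm:KPglobal}: the local computation shows the deformation preserves KP integrability of $F_{o,\lcoord}$, KP integrability is independent of the choice of $(o,\lcoord)$, hence the globally defined transformation of the system $\{\omega^{(g)}_n\}$ is a KP symmetry, and since $o$ is arbitrary every specialization of $(z^+,z^-)$ is covered. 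Your claim that the identification ``yields at once'' the statement for arbitrary specializations skips this globalization step; you should add it, as without it the argument does not reach the specializations actually needed in the proof of Theorem~\ref{th:KP}.
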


By an infinitesimal transformation here we mean $\omega_n(z_{\set n})\mapsto \omega_n(z_{\set n})+\epsilon\, \Delta\omega_n(z_{\set n})$ with an infinitesimal parameter $\epsilon$.
By a specialization $z^+$ and $z^-$ we mean a particular choice of these points in a vicinity of the diagonal in~$\Sigma^2$, or the coefficient of any monomial in $z^+,z^-$ in the power expansion in the local coordinate~$z$ at some point of~$\Sigma$, or the integral over some contour, or the residue at some point of the spectral curve.

More explicitly, the statement of Theorem \ref{thm:Delta-omega} means the following. Assume that the differentials $\omega^{(g)}_n$ depend on an additional parameter~$\tau$ and the dependence of the differentials on~$\tau$ is governed by the differential equation of the form
\begin{equation}\label{eq:Delta-omega}
\frac{d}{d\tau}\omega_n(z_{\set n})=\Omega_n(z^+,z^-;z_{\set n})
\end{equation}
for some specialization of $z^+,z^-$ also possibly depending on~$\tau$ and $\hbar$. In what follows we consider the deformations with $z^+,z^-$ having at most simple poles in $\hbar$, thus the right hand side of~\eqref{eq:Delta-omega1} might contain a pole in $\hbar$ as well; this pole should be deducted as it does not affect the KP integrability and is removed on the left hand side by our conventions, cf.~Remark~\ref{rem:01irrelevant}.  Assume also that the differentials possess KP integrability property for some value of~$\tau$. Then they possess this property for all values of~$\tau$.

\begin{corollary}\label{cor:Delta-omega}
For any $r$ and any specialization of $z\in\Sigma$ the transformation
\begin{equation}
\Delta\omega^{(g)}_n(z_{\set n})=\cW^{(g),r}_n(z;z_{\set n})
\end{equation}
is an infinitesimal KP symmetry for a system of differentials $\{\omega^{(g)}_n\}$.
\end{corollary}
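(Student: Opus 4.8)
The plan is to obtain Corollary~\ref{cor:Delta-omega} as a direct consequence of Theorem~\ref{thm:Delta-omega}, by recognizing $\cW^{(g),r}_n$ as a coefficient in the expansion of a suitably specialized $\Omega_n$ and then invoking the linearity of the space of infinitesimal KP symmetries together with its compatibility with coefficient extraction.

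First I would note that, by the very definition~\eqref{eq:cW}, the $(n+1)$-differential $\cW_n(x,u;z_{\set n})$ is nothing but $\Omega_n(z^+,z^-;z_{\set n})$ evaluated at the particular point $z^\pm=e^{\pm\frac{u\hbar}{2}\partial_{x}}z$ of a vicinity of the diagonal in $\Sigma^2$. For each fixed value of the parameter $u$ and each specialization of $z$, this is exactly an admissible specialization of $(z^+,z^-)$ in the sense spelled out after Theorem~\ref{thm:Delta-omega}. Hence Theorem~\ref{thm:Delta-omega} already gives that the transformation $\Delta\omega_n(z_{\set n})=\cW_n(x,u;z_{\set n})$ is an infinitesimal KP symmetry, now regarded as a family parametrized by $u$ and by the specialization point $z$.

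Next I would pass from $\cW_n$ to its coefficient $\cW^{(g),r}_n=[u^r\hbar^{2g-1+n}]\cW_n$. The extraction of the $u^r$-coefficient is itself an instance of the permitted specializations: after the substitution $z^\pm=e^{\pm\frac{u\hbar}{2}\partial_{x}}z$ the differences $z^\pm-z$ are power series in $u\hbar$, so $[u^r]\cW_n$ is a coefficient of a monomial in $(z^+,z^-)$ in the local-coordinate expansion at $z$, which is an allowed specialization; equivalently one uses that infinitesimal KP symmetries form a linear space closed under the derivatives $\partial_u^r|_{u=0}$. This produces the infinitesimal KP symmetry $[u^r]\cW_n=\sum_g\hbar^{2g-1+n}\cW^{(g),r}_n$.

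The point I must still settle, and the step I expect to be the real (if modest) obstacle, is the separation of the genus grading: I have to argue that each $\hbar$-homogeneous component $\cW^{(g),r}_n$ of the symmetry $[u^r]\cW_n$ is again a symmetry, rather than only their $\hbar$-weighted sum. Here $\hbar$ is the genus-expansion parameter intrinsic to the definition of a KP symmetry, not an inert bookkeeping variable, so this needs a word of justification. I would use that KP integrability is invariant under the rescaling $\hbar\mapsto\lambda\hbar$ accompanied by $\omega^{(g)}_n\mapsto\lambda^{2g-2+n}\omega^{(g)}_n$; this rescaling being a symmetry, the linearized symmetry condition is homogeneous in $\lambda$, and therefore each $\hbar$-weight component of a symmetry is separately a symmetry. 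Matching the grading also explains the shift by $\hbar^{-1}$ between $[u^r]\cW_n$ and the deformation $\Delta\omega^{(g)}_n=\cW^{(g),r}_n$ recorded in the statement, i.e.~the simple pole in $\hbar$ that, as already noted after Theorem~\ref{thm:Delta-omega}, is deducted by the $(0,1)$-exclusion convention and does not affect integrability. Combining these observations yields that $\Delta\omega^{(g)}_n(z_{\set n})=\cW^{(g),r}_n(z;z_{\set n})$ is an infinitesimal KP symmetry for every $r$, every $g$, and every specialization of $z$.
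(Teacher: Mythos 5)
Your first two steps are precisely the argument the paper intends (it gives no separate proof of the corollary beyond this): by~\eqref{eq:cW}, $\cW_n$ is $\Omega_n$ evaluated at the admissible, $\hbar$-dependent specialization $z^\pm=e^{\pm\frac{u\hbar}{2}\partial_x}z$, Theorem~\ref{thm:Delta-omega} applies to that specialization, and extracting $[u^r]$ is a further linear specialization of the family of symmetries, so $[u^r]\cW_n$ is an infinitesimal KP symmetry. Combined with the paper's remark that an overall factor of $\hbar^{-1}$ and the deduction of the resulting simple pole in $\hbar$ do not affect KP integrability, this already completes the proof.

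Your third step, however, addresses a difficulty that is not present, and the justification you offer for it would not hold up if it were needed. The corollary does not claim that each $\hbar$-homogeneous piece is separately a symmetry with the other genera frozen; the index $g$ runs over all genera simultaneously, so the asserted transformation of the generating series $\omega_n=\sum_g\hbar^{2g-2+n}\omega^{(g)}_n$ is exactly
\begin{equation*}
\Delta\omega_n=\sum_g\hbar^{2g-2+n}\,\cW^{(g),r}_n=\hbar^{-1}[u^r]\,\cW_n
\end{equation*}
(modulo the $(0,1)$ convention), i.e.\ precisely the object your first two steps produce. This is also how the corollary is used later: the right-hand side of~\eqref{eq:var-of-y} deforms all $(g,n)$ at once with a common specialization of $z$. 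As for the rescaling argument: if $R_\lambda$ denotes $\omega^{(g)}_n\mapsto\lambda^{2g-2+n}\omega^{(g)}_n$, then applying $R_\lambda$ to a first-order KP-integrable deformation only shows that $R_\lambda\Delta\omega$ is a symmetry at the rescaled base point $R_\lambda\omega$, not at $\omega$ itself; and at the operator level, extracting an $\hbar$-coefficient of the Fock-space operator does not correspond to extracting an $\hbar$-coefficient of the induced deformation of the differentials, because $\hat Z_{o,z}$ itself depends on $\hbar$. So the genus-separation claim you append at the end (``for every $g$'' with the other genera fixed) is both unproved and unnecessary; simply identify the corollary's transformation with $\hbar^{-1}[u^r]\cW_n$ and stop after your second step.
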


\begin{proof}[Proof of Theorem~\ref{thm:Delta-omega}]
First, assume that $z^+$ and $z^-$ are chosen in a neighborhood of some point $o\in\Sigma$ regular for all differentials $\omega^{(g)}_n$. Let us choose an arbitrary local coordinate~$z$ at this point and identify a point in a neighborhood of~$o$ with the value of the coordinate~$z$ at this point. A choice of a point~$o$ and a coordinate~$z$ determines the corresponding potential~$F_{o,z}=F_{o,z}(p_1,p_2,\dots)$ defined by \eqref{eq:omega-expansion} and \eqref{eq:F-expansion}.

Now we are going to apply the logic that has earlier appeared in~\cite{ABDKS1,ABDKS3}, namely, we perform local computation for the formal expansions using the Fock space semi-infinite wedge formalism (we refer the reader to the \emph{op. cit.} for the notation that we use here, see, e.g.,~\cite[Section 2.2]{ABDKS3}), but the formulas that we obtain have the global meaning on the spectral curve. {Qua notation, we identify the Fock space with $\C[[p_1,p_2,\dots]]$, the operators $J_k$ are the multiplication by $p_{-k}$ for $k<0$ and derivatives $k\partial_{p_{k}}$ for $k>0$, and $\vac = 1$ and $\covac$ extracts the constant term.} 

The potential $F_{o,z}$ defines, in turn, the corresponding vector $\hat Z_{o,z}\vac$ in the Fock space where $\hat Z_{o,z}=e^{F_{o,z}(J_{-1},J_{-2},\dots)}$. With this notation, the correspondence between the potential and $n$-differentials takes the form of connected vacuum expectation values
\begin{equation}
\omega_{n}(z_{\set n})=\VEVc{J(z_1)\dots J(z_n)\hat Z_{o,z}}
\prod_{i=1}^n\frac{dz_i}{z_i},
\qquad J(z)=\sum_{m=-\infty}^\infty
z^mJ_{m}.
\end{equation}

The definition of~$\Omega_n$ specialized to the setting of power expansions in~$z$ variables can be represented, in turn, in the form
\begin{equation}
\Omega_{n}(z^+,z^-;z_{\set n})-\delta_{n,0}\tfrac{\sqrt{dz^+\,dz^-}}{z^+-z^-}=\VEVc{J(z_1)\dots J(z_n)\cK(z^+,z^-)\hat Z_{o,z}}\sqrt{dz^+dz^-}
\prod_{i=1}^n\frac{dz_i}{z_i},
\end{equation}
where $\cK$ is the operator acting in the Fock space and defined by
\begin{equation}
\cK(z_1,z_2)=\frac{e^{\sum\limits_{k<0}\frac{z_1^k-z_2^k}{k}J_k}e^{\sum\limits_{k>0}\frac{z_1^k-z_2^k}{k}J_k}-1}{z_1-z_2}.
\end{equation}
It follows that the infinitesimal transformation of the vector $\hat Z_{o,z}\vac$ in the Fock space corresponding to the transformation~\eqref{eq:Delta-omega} is given by
\begin{equation}\label{eq:Delta-Z}
\Delta \hat Z_{o,z}\vac=\sqrt{dz^+dz^-}\;\cK(z^+,z^-)\hat Z_{o,z}\vac.
\end{equation}
It is known~\cite{MJD,Kac,BDKS1} that the operator $\cK(z_1,z_2)$ belongs to the Lie algebra $\widehat{\mathfrak{gl}}(\infty)$ of infinitesimal KP symmetries. 

We conclude that the transformations~\eqref{eq:Delta-Z} and~\eqref{eq:Delta-omega} preserve infinitesimally KP integrability for the potential~$F_{o,z}$ associated with the point~$o$ and the local coordinate~$z$ at this point (the factor $\sqrt{dz^+dz^-}$ is a numerical constant that can be set arbitrarily and does not affect KP integrability).
On the other hand, by Theorem~\ref{thm:KPglobal}, the KP integrability does not depend on a choice of the expansion point and the local coordinate. Hence, the transformation~\eqref{eq:Delta-omega} provides an infinitesimal KP symmetry for the whole globally defined system of differentials~$\{\omega^{(g)}_n\}$. Since these arguments are applied to arbitrary choice of~$o$, we obtain that the statement of Theorem~\ref{thm:Delta-omega} is valid in its full generality.
\end{proof}

\section{Proof of Theorem~\ref{th:KP}} \label{sec:ProofOfTheorem}

Let us study the dependence of the TR differentials on a choice of the function~$y$. Thus we consider a family of spectral curve data such that the spectral curve~$\Sigma$, the Bergman kernel~$B$, and the function $x$ are fixed (and hence, the positions of the points $q_1,\dots,q_N$ are also fixed) while the function $y$ (that is, its expansion at the points $q_i$) varies in a family smoothly depending on a parameter~$\tau$.
The corresponding dependence of the TR differentials on~$\tau$ in the stable range $2g-2+n>0$ is described by the equation~\cite{EO-1st}:
\begin{equation}\label{eq:var-of-y}
\frac{d}{d\tau}\omega^{(g)}_n(z_{\set n})=\sum_{i=1}^N\res\limits_{z=q_i}\left(
\omega^{(g)}_{n+1}(z_{\set n},z)\int_{q_i}^z\frac{d y}{d\tau}\,dx \right).
\end{equation}
This equation holds true for any variation of~$y$ such that $dy(q_i)\ne 0$; in particular, we allow a change of the constants $y'(q_i)$ as long as these constants are not vanishing in the family.

By Corollary~\ref{cor:Delta-omega} and~\eqref{eq:W0}, the right hand side of Eq.~\eqref{eq:var-of-y} is an infinitesimal KP symmetry. It follows that the KP integrability property is preserved in the family, and hence the system of differentials $\{\omega^{(g)}_n\}$ is KP integrable for some choice of~$y$ if and only if it is KP integrable for any other choice of~$y$, provided that $dy$ is non-vanishing at $q_1,\dots,q_N$.

Thus, in the setting of Theorem~\ref{th:KP}, it is sufficient to prove KP integrability in the case when $y=z$ is a global affine coordinate on $\Sigma=\C P^1$. The KP integrability for this choice of~$y$ {and $\Sigma$} is proved in~\cite{ABDKS3}. This completes the proof of Theorem~\ref{th:KP}.

\section{Generalized setting}

\label{sec:Generalized}

The statement of Theorem~\ref{th:KP} holds in a more general setting. It is sufficient to assume that $x$ is defined in a vicinity of the points $q_1,\dots,q_N$ such that $dx$ has simple zeros in these points and does not extend necessarily to the whole spectral curve, and  $dy$ is holomorphic and nonvanishing at these points. Even if $dx$ extends beyond this vicinity and has other zeros, they do not contribute to TR, by definition.

The definition of TR differentials can be extended to this relaxed setting as well. If the Bergman kernel $B$ entering the initial data of recursion is defined locally in a neighborhood of the points $q_j$, then the resulting differentials are also defined locally in a neighborhood of those points. If $\Sigma$ is compact and $B$ extends globally as a meromorphic bidifferential with no singularities beyond the diagonal, then the TR differentials $\omega^{(g)}_n$ are also defined globally for $2g-2+n>0$, and have no poles different from $q_1,\dots,q_N$ (with respect to each of the arguments), even though the functions $x$ and $y$ of the initial data are defined locally near the points~$q_j$.

\begin{theorem}\label{th:KP-generalized}
If the spectral curve is rational, $\Sigma=\C P^1$, and $B$ is the standard Bergman kernel, then the KP integrability for the TR differentials holds in the relaxed setting of this section as well.
\end{theorem}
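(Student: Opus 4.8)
The plan is to reduce Theorem~\ref{th:KP-generalized} to the already-established Theorem~\ref{th:KP} by the same deformation-of-$y$ argument used in Section~\ref{sec:ProofOfTheorem}, checking carefully that every tool in that argument survives the passage to the relaxed setting. The crucial observation is that the global KP-symmetry machinery of Section~\ref{sec:GlobalKPSymmetries} was never actually sensitive to $x$ or $y$ being globally defined: the differentials $\{\omega^{(g)}_n\}$ for $2g-2+n>0$ \emph{are} global meromorphic (indeed rational) objects on $\Sigma=\C P^1$, with poles only at $q_1,\dots,q_N$, precisely because $B$ is the standard Bergman kernel and extends globally. It is only $\omega^{(0)}_1=y\,dx$ (and the ingredient data $x$, $y$) that lives merely near the $q_i$, and by Remark~\ref{rem:01irrelevant} the choice of $\omega^{(0)}_1$ is irrelevant for KP integrability. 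Hence the system $\{\omega^{(g)}_n\}_{2g-2+n>0}$ to which Definition~\ref{def:KPdiff} is applied is genuinely a global system of the type treated throughout the paper.

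First I would verify that the variational formula~\eqref{eq:var-of-y} continues to hold in the relaxed setting. Its right-hand side is a sum of residues at $q_1,\dots,q_N$, and both ingredients entering each residue---the differential $\omega^{(g)}_{n+1}$ near $q_i$ and the primitive $\int_{q_i}^z \tfrac{dy}{d\tau}\,dx$---depend only on data in a punctured neighborhood of $q_i$. Thus the formula is purely local at each $q_i$ and makes sense even when $x$ and $y$ are defined only near the $q_i$; it is the standard Eynard--Orantin deformation formula, whose derivation uses only the loop equations and the local behaviour at the ramification points. Next I would invoke Corollary~\ref{cor:Delta-omega} together with~\eqref{eq:W0}: the right-hand side of~\eqref{eq:var-of-y} is, term by term, a residue of the symmetry $\cW^{(g),0}_n(z;z_{\set n})=\omega^{(g)}_{n+1}(z,z_{\set n})$ against a fixed kernel, and by the last clause of Theorem~\ref{thm:Delta-omega} (``the residue at some point of the spectral curve'' is an allowed specialization of $(z^+,z^-)$, here degenerating to the single-point symmetry) such a residue is again an infinitesimal KP symmetry. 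Consequently KP integrability is preserved along the entire $\tau$-family of choices of $y$, exactly as in Section~\ref{sec:ProofOfTheorem}.

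With the deformation invariance in hand, the remaining step is to exhibit \emph{one} choice of initial data within the relaxed class that is covered by Theorem~\ref{th:KP}, and to move to it through the family. Since $dx$ is holomorphic with simple zeros at the $q_i$ and $dy$ is holomorphic and nonvanishing there, one may deform $y$ (keeping $x$, $B$, and the positions $q_i$ fixed) to reach a spectral curve to which the already-proven Theorem~\ref{th:KP} applies; if even $x$ is only locally defined one first notes that, because the recursion and~\eqref{eq:var-of-y} see $x$ only through its germ at the $q_i$, the value of $\{\omega^{(g)}_n\}_{2g-2+n>0}$ is unchanged upon replacing $x$ by any global meromorphic $x$ on $\C P^1$ with the same germs at the $q_i$ (such a global representative exists, e.g.\ by interpolation, since only finitely many simple zeros and finitely many Taylor coefficients are prescribed). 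This places us squarely in the hypotheses of Theorem~\ref{th:KP}, and the chain of symmetries then transports KP integrability back to the originally given locally-defined data.

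The main obstacle I anticipate is the justification that~\eqref{eq:var-of-y} and the symmetry identification are legitimately \emph{local} at the $q_i$, i.e.\ that nothing in the global structure of $\omega^{(g)}_{n+1}$ away from the $q_i$ is needed either to state the residue or to certify it as a KP symmetry. This is where one must use that $B$ is the standard Bergman kernel: it guarantees that the $\omega^{(g)}_n$ are globally rational with controlled pole structure, so that the semi-infinite-wedge / Fock-space argument underlying Theorem~\ref{thm:Delta-omega}---which works with global objects expanded at a regular point $o$---applies verbatim, while the local-in-$y$ nature of the deformation lets us deform freely without disturbing that global structure. Once this compatibility is confirmed, no new computation is required: the proof is a matter of observing that each ingredient of the Section~\ref{sec:ProofOfTheorem} argument is robust under relaxing the global definedness of $x$ and $y$.
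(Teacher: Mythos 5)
Your first step (deforming $y$ at fixed $x$ via~\eqref{eq:var-of-y} and Corollary~\ref{cor:Delta-omega} with $r=0$) is sound and agrees with Section~\ref{sec:ProofOfTheorem}. The gap is in how you dispose of the locally defined $x$. You claim one can replace the local $x$ by a global meromorphic function on $\C P^1$ ``with the same germs at the $q_i$'' because ``only finitely many Taylor coefficients are prescribed.'' This is false for the system as a whole: each individual $\omega^{(g)}_n$ depends on only finitely many jet coefficients of $x$ at the $q_i$, but the required jet order grows with $(g,n)$, and KP integrability is a property of the entire collection at once. Moreover, even matching a high but finite jet at $N$ points is obstructed if you want to land in the hypotheses of Theorem~\ref{th:KP}: for a rational $x$ the divisor of $dx$ has degree $-2$, so prescribing high-order jets forces the degree of $x$ up and thereby forces \emph{extra} zeros of $dx$ beyond $q_1,\dots,q_N$. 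In the setting of Theorem~\ref{th:KP} those extra zeros are genuine ramification points that contribute to the recursion, so the interpolated global data does not reproduce the local data's differentials. Hence the reduction ``replace $x$ by a global representative, then quote Theorem~\ref{th:KP}'' does not go through.

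The paper closes exactly this gap by deforming $x$ itself (not just $y$) through a one-parameter family of \emph{local} data, using the variational formula~\eqref{eq:omega-B-variation}, whose $x$-variation term is $-\widetilde{\cW}^{(g),1}_n(z;z_{\set n})\,\tfrac{d}{d\tau}x(z)$. That term is an infinitesimal KP symmetry by Corollary~\ref{cor:Delta-omega} with $r=1$ (after discarding the $\hbar$-singular $(0,1)$ contribution distinguishing $\widetilde{\cW}^{(g),1}_n$ from $\cW^{(g),1}_n$, which is irrelevant by Remark~\ref{rem:01irrelevant}); the positions of the $q_j$ are allowed to move along the way. One then connects the given local data to, say, $x$ a polynomial of degree $N+1$ with simple critical points, where Theorem~\ref{th:KP} applies. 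Your proposal never invokes the $r=1$ symmetry or the $x$-deformation formula, and without them the passage from locally defined $x$ to a globally defined one is not justified.
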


\begin{proof}
Consider a variation of the initial data of recursion such that the spectral curve {$\Sigma=\C P^1$} and the {(standard)} Bergman kernel are fixed while the functions~$x$ and~$y$ vary in a one-parameter family depending on a parameter~$\tau$. We allow the positions of the points~$q_j$ to change in the family provided that they remain pairwise distinct and the assumptions on $dx$ and $dy$ are satisfied for all parameter values. Then, the dependence of the TR differentials on~$\tau$ in the stable range $2g-2+n>0$ is described by the following equation, see~\cite{K} and Appendix~\ref{sec:varTR}
\begin{equation}\label{eq:omega-B-variation}
\frac{d}{d \tau}\omega^{(g)}_{n}(z_{\set n})=\sum_{j=1}^N\res\limits_{z=q_j}\Bigg(
\cW^{(g),0}_n(z;z_{\set n})\int\limits^z\!\!\tfrac{d}{d \tau}\bigl(y\,dx\bigr)
-\widetilde{\cW}^{(g),1}_n(z;z_{\set n})\tfrac{d}{d \tau} x(z)\Bigg),
\end{equation}
where $\cW^{(g),0}_n$ and $\widetilde{\cW}^{(g),1}_n$ are defined by~\eqref{eq:W0} and~\eqref{eq:tildeW1}, respectively. By Theorem~\ref{thm:Delta-omega} and its Corollary~\ref{cor:Delta-omega}, the KP integrability for the TR differentials is preserved in the family (note that we can use $\widetilde{\cW}^{(g),1}_n$ instead of ${\cW}^{(g),1}_n$ once the singular term in $\hbar$ is removed). Any collection of initial data can be connected by an analytic family with those satisfying the assumptions of Theorem~\ref{th:KP}, that is, when $dx$ extends globally as a meromorphic differential with no zeros except $q_1,\dots,q_N$. Say, it can be connected with the case when $x(z)$ is a polynomial of degree $N+1$ with no multiple critical points. Therefore, the assertion of Theorem~\ref{th:KP-generalized} follows from that one of Theorem~\ref{th:KP}.

\end{proof}

\section{Application:  \texorpdfstring{$r$}{r}-th roots}

\label{sec:ApplicationRRoots}

Consider the moduli space of generalized $r$-spin structures $\overline{\mathcal{M}}_{g;\vec a}^{r,s}$ which parametrizes $r$-th roots of the line bundle
\begin{align}
	\omega_{\log}^{\otimes s}\biggl(-\sum_{i=1}^n a_i \mathsf{x}_i \biggr)
\end{align}
on genus $g$ curves with $n$ marked points $(C,\mathsf{x}_1,\dots,\mathsf{x}_n)$, where $\omega_{\log} = \omega(\sum_i \mathsf{x}_i)$, $r$ and $s$ are positive integers, and $a_1,\dots,a_n$ are integers satisfying
\begin{align}
	a_1 + a_2 + \cdots + a_n \equiv (2g-2+n)s \pmod{r}.
\end{align}
This condition guarantees the existence of a line bundle on the underlying curve whose $r$-th tensor power is isomorphic to $\omega_{\log}^{\otimes s}(-\sum_i a_i x_i)$. Let $\pi \colon {\mathcal{C}}_{g,\vec a}^{r,s} \to \overline{\mathcal{M}}_{g,\vec{a}}^{r,s}$ be the universal curve, and $\mathcal{L} \to \overline{\mathcal C}_{g,\vec a}^{r,s}$ the universal $r$-th root. The Chiodo formula expresses the Chern characters of the derived pushforward of the universal $r$-th root $\mathrm{ch}_m(R^{\bullet} \pi_{\ast}{\mathcal L})$  in terms of the tautological classes, see~\cite{Chi08}.

We can then consider the family of Chern classes pushed forward to the moduli spaces of stable curves under the natural projection map $\epsilon\colon  \overline{\mathcal{M}}_{g,\vec{a}}^{r,s} \to \overline{\mathcal{M}}_{g,n}$.
\begin{equation}
	\OmClass_{g,n}(r,s;a_1,\dots,a_n)
	\coloneqq
	\epsilon_{\ast}  \exp{\Biggl(
		\sum_{m=1}^\infty (-1)^m  (m-1)! \, \mathrm{ch}_m(R^{\bullet} \pi_{\ast}{\mathcal L})
		\Biggr)}
	\in
	R^*(\overline{\mathcal{M}}_{g,n}).
\end{equation}
These classes are called the $\OmClass$-classes, or, alternatively, the Chiodo classes in the literature.

Consider the spectral curve data
\begin{align} \label{eq:spectral-curve-data}
\Sigma=\C P^1, \quad  x=\log z - z^r, \quad y=z^s, \quad B=\frac{dz_1dz_2}{(z_1-z_2)^2},
\end{align}
where, as usual, $z$ is the global affine coordinate on $\C P^1$. The following theorem is established in~\cite{LPSZ} (see also~\cite{SSZ} for the $s=1$ case):

\begin{theorem}[\cite{LPSZ}]\label{TH1}
	The expansions of the correlation differentials of the spectral curve~\eqref{eq:spectral-curve-data} at the point $z_1=\cdots=z_n=0$ in the local coordinates $X_i=X(z_i)=\exp(x(z_i))$ are equal to
	\begin{align}\label{eq:correlation-forms}
		\omega^{(g)}_{n}= & d_1\otimes \cdots \otimes d_n \sum_{\mu_1,\dots,\mu_n=1}^\infty \prod_{j=1}^n \Bigg(X_j^{\mu_j}
		\frac{\left(\frac{\mu_j}{r}\right)^{\floor*{\frac{\mu_j}r}}}{\floor*{\frac{\mu_j}r}!}\Bigg)
		\times
		\frac{r^{2g-2+n+\frac{(2g-2+n)s+\sum_{j=1}^n \mu_j}r}}{s^{2g-2+n}}
		\\ \notag &
		\times
		\int_{\overline{\mathcal{M}}_{g,n}} \frac{\OmClass_{g,n} \left (r,s; r-r \left < \frac{\mu_1}{r} \right >,
			\dots,
			r - r \left < \frac{\mu_n}{r} \right > \right )
		}{\prod_{j=1}^n (1-\frac{\mu_j}{r}\psi_j)},
	\end{align}
	where $\frac{\mu}r = \lfloor \frac{\mu}r \rfloor + \langle \frac{\mu}r \rangle$ is the decomposition into the integer and the fractional parts.
\end{theorem}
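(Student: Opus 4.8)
The plan is to prove the identity through the correspondence between topological recursion and cohomological field theories (CohFTs): for a spectral curve all of whose ramification points are simple, the correlation differentials $\omega^{(g)}_n$ can be written as a sum over stable graphs whose vertices carry $\psi$-class intersection numbers dressed by a Givental $R$-matrix and a translation, with these Givental data extracted from the local behaviour of $(x,y,B)$ at the ramification points. The proof then splits into three parts: (i) reading off the Givental data of the curve $x=\log z-z^r$, $y=z^s$; (ii) recognizing that these data reconstruct precisely the CohFT assembled from the Chiodo classes $\OmClass_{g,n}$; and (iii) performing the expansion at $z_1=\dots=z_n=0$ in the variables $X_i=e^{x(z_i)}$ and reading off the combinatorial prefactors.

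First I would locate the ramification points. From $dx=\frac{1-rz^r}{z}\,dz$ the zeros of $dx$ are the $r$ simple points $q_i$ with $q_i^r=1/r$, while $z=0$ and $z=\infty$ are poles of $dx$; in particular the curve is of the admissible type for the CohFT formalism, and the expansion point $z=0$ is regular and disjoint from the $q_i$. Near each $q_i$ I would introduce the local ramification coordinate, expand $x-x(q_i)$ and $y$ in it, and use the normalization $B=\frac{dz_1dz_2}{(z_1-z_2)^2}$ to fix the flat (Frobenius) structure. Performing the Laplace transform of these local expansions produces Bessel-type generating series whose coefficients are the entries of the $R$-matrix and of the translation vector.

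Next I would match this data with the Chiodo CohFT. The exponential $\exp\!\big(\sum_m(-1)^m(m-1)!\,\mathrm{ch}_m(R^\bullet\pi_*\mathcal L)\big)$ defining $\OmClass_{g,n}$ is exactly the shape that exhibits the Chiodo classes as a Givental $R$-action on the trivial (topological) CohFT, and the corresponding $R$-matrix, translation, and flat coordinates were computed in closed form in~\cite{Chi08,LPSZ}. The identification then amounts to checking that the spectral-curve data of the previous step reproduce these entries sector by sector: the residue class of $\mu_j$ modulo $r$ selects the sheet in which one expands, which is what produces the shifts $a_j=r-r\langle\mu_j/r\rangle$, while the overall scalings by $r$ and $s$ come from the leading coefficients of $x$ and $y$ at the $q_i$.

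Finally I would carry out the expansion at $z_1=\dots=z_n=0$. Writing $X=z\,e^{-z^r}$ and inverting this relation by Lagrange inversion yields the hypergeometric prefactor $\frac{(\mu_j/r)^{\lfloor\mu_j/r\rfloor}}{\lfloor\mu_j/r\rfloor!}$ for each variable, together with the overall factor $r^{\,2g-2+n+\frac{(2g-2+n)s+\sum_j\mu_j}{r}}/s^{\,2g-2+n}$; substituting the Givental identification of the third step into the stable-graph sum then converts it term by term into the claimed integral of $\OmClass_{g,n}$ against $\prod_j(1-\frac{\mu_j}{r}\psi_j)^{-1}$ over $\overline{\mathcal M}_{g,n}$. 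The main obstacle is the third step: extracting the $R$-matrix from the spectral curve requires a delicate asymptotic analysis at each $q_i$, and matching it to the Chiodo $R$-matrix requires tracking every normalization constant and, especially, getting the sector-dependent shifts $a_j$ correct; by comparison the Lagrange-inversion bookkeeping of the last step is routine but must be handled carefully to produce the exact powers of $r$ and $s$.
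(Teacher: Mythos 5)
This theorem is not proved in the paper at all: it is imported verbatim from~\cite{LPSZ} (with~\cite{SSZ} for $s=1$), and your outline is essentially the strategy of that reference --- write the $\OmClass$-classes as a Givental $R$-matrix action on a topological field theory via Chiodo's formula~\cite{Chi08}, invoke the identification of such Givental-type partition functions with topological recursion on a local spectral curve~\cite{DOSS}, compute the $R$-matrix and translation from the Laplace transform of $(x,y,B)$ at the $r$ simple zeros of $dx$, and finish with the Lagrange-inversion bookkeeping in $X=ze^{-z^r}$. So the proposal follows the same route as the actual proof of the cited result; no discrepancy to report.
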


Note that the spectral curve data satisfies the conditions of Theorem~\ref{th:KP}. Hence, we obtain the following immediate corollary:

\begin{corollary} \label{cor:OmegaKP} The following partition function is a tau function of KP hierarchy:
	\begin{align}
		\exp \Bigg(&\sum_{\substack{g\geq 0, n\geq 1 \\ 2g-2+n>0}} \frac{\hbar^{2g-2+n}}{n!} \sum_{\mu_1,\dots,\mu_n=1}^\infty \prod_{j=1}^n \bigg({p}_{\mu_j} \frac{\left(\frac{\mu_j}{r}\right)^{\floor*{\frac{\mu_j}r}}}{\floor*{\frac{\mu_j}r}!}\bigg) \times 		\frac{r^{2g-2+n+\frac{(2g-2+n)s+\sum_{j=1}^n \mu_j}r}}{s^{2g-2+n}}
		\\ \notag & \qquad \qquad \qquad \qquad
		\times
			\int_{\overline{\mathcal{M}}_{g,n}} \frac{\OmClass_{g,n} \left (r,s; r-r \left < \frac{\mu_1}{r} \right >,
			\dots,
			r - r \left < \frac{\mu_n}{r} \right > \right )
		}{\prod_{j=1}^n (1-\frac{\mu_j}{r}\psi_j)}
		\\ \notag  & + \frac 12 \sum_{\mu_1,\mu_2=1}^\infty {p}_{\mu_1}{p}_{\mu_2} [X_1^{\mu_1}X_2^{\mu_2}] \log\bigg(\frac{z_1-z_2}{x_1-x_2} \bigg)  \Bigg).
	\end{align}
\end{corollary}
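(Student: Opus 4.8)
The plan is to obtain Corollary~\ref{cor:OmegaKP} as a direct application of Theorem~\ref{th:KP} combined with the explicit description of the correlation differentials in Theorem~\ref{TH1}. First I would check that the spectral curve data~\eqref{eq:spectral-curve-data} satisfies the hypotheses of Theorem~\ref{th:KP}: the curve is $\Sigma=\C P^1$, the bidifferential is the standard Bergman kernel, the differential $dx=\frac{1-rz^r}{z}\,dz$ has only simple zeros (the $r$ points with $z^r=1/r$), and $dy=s\,z^{s-1}\,dz$ is nonvanishing there since those zeros avoid $z=0$. Hence Theorem~\ref{th:KP} applies and the system of TR differentials is KP integrable in the sense of Definition~\ref{def:KPdiff}.

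By Definition~\ref{def:KPdiff} together with Theorem~\ref{thm:KPglobal}, KP integrability means that $Z_{o,\lcoord}=\exp(F_{o,\lcoord})$ is a KP tau function for \emph{every} choice of a regular point $o$ and a local coordinate $\lcoord$. I would take $o$ to be the point $z=0$ and $\lcoord=X=\exp(x)=z\,e^{-z^r}$. Since $X=z+O(z^{r+1})$, the function $X$ is a genuine local coordinate at $z=0$, and since the only poles of the stable differentials sit at the zeros of $dx$, which lie away from $z=0$, the point $z=0$ is regular. It then remains to identify $F_{o,\lcoord}$ for this choice with the exponent in the corollary. For the stable range $2g-2+n>0$ this is immediate: comparing the expansion~\eqref{eq:omega-expansion} (with $\lcoord=X$) to Theorem~\ref{TH1}, applying $d_1\otimes\cdots\otimes d_n$ to $\prod_j X_j^{\mu_j}$ yields exactly $\prod_j \mu_j X_j^{\mu_j-1}dX_j$, so the coefficients $f^{(g)}_{\mu_1,\dots,\mu_n}$ are precisely the product of the combinatorial prefactors and the $\OmClass$-integral; substituting into~\eqref{eq:F-expansion} reproduces the first sum of the corollary verbatim, with $X_j^{\mu_j}\leftrightarrow p_{\mu_j}$ and the grading $\hbar^{2g-2+n}$.

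The only part needing a genuine computation is the unstable $(0,2)$ contribution, which by~\eqref{eq:omega-expansion} comes from $\omega^{(0)}_2-\frac{dX_1dX_2}{(X_1-X_2)^2}=d_1 d_2\log\frac{z_1-z_2}{X_1-X_2}$, so that this part of $F_{o,\lcoord}$ equals $\tfrac12\sum_{\mu_1,\mu_2\ge1}p_{\mu_1}p_{\mu_2}\,[X_1^{\mu_1}X_2^{\mu_2}]\log\frac{z_1-z_2}{X_1-X_2}$, whereas the corollary is written with $x_1-x_2$ in place of $X_1-X_2$. The hard part, and essentially the only obstacle, is therefore to show that these two expressions carry the same coefficients $[X_1^{\mu_1}X_2^{\mu_2}]$ for all $\mu_1,\mu_2\ge1$, equivalently that $[X_1^{\mu_1}X_2^{\mu_2}]\log\frac{X_1-X_2}{x_1-x_2}=0$ in that range.

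To settle this I would use $x_i=\log X_i$ together with the identity $\frac{X_1-X_2}{x_1-x_2}=\frac{e^{x_1}-e^{x_2}}{x_1-x_2}=e^{(x_1+x_2)/2}\,\cS(x_1-x_2)$, which gives $\log\frac{X_1-X_2}{x_1-x_2}=\tfrac12(\log X_1+\log X_2)+\log\cS(\log X_1-\log X_2)$. The first summand is a sum of single-variable terms and contributes nothing to genuinely mixed coefficients. The second is a power series in $(\log X_1-\log X_2)^2$ with no constant term, hence an (infinite) combination of monomials $(\log X_1)^a(\log X_2)^b$; since any polynomial in $\log X$ carries no positive powers of $X$, each such monomial has vanishing $[X_1^{\mu_1}X_2^{\mu_2}]$ whenever $\mu_1,\mu_2\ge1$. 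This shows the corollary's $(0,2)$ term agrees with the honest coordinate computation, so $F_{o,\lcoord}$ equals the exponent in the statement and the exponential is a KP tau function, completing the proof.
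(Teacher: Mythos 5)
Your proof is correct and follows the same route as the paper, which simply declares the corollary immediate from Theorem~\ref{th:KP} applied to the spectral curve data~\eqref{eq:spectral-curve-data} together with the expansion of Theorem~\ref{TH1}. Your verification of the hypotheses, the identification of the stable part of the potential with the exponent, and in particular the check that the $(0,2)$ term may be written with $x_1-x_2$ in place of $X_1-X_2$ (since $\tfrac12(\log X_1+\log X_2)+\log\cS(\log X_1-\log X_2)$ contributes nothing to the coefficients $[X_1^{\mu_1}X_2^{\mu_2}]$ with $\mu_1,\mu_2\ge 1$) are exactly the details the paper leaves implicit.
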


\begin{remark} In the case $r=st$ for some positive integer $t$, this function up to an extra factor of $\exp(\hbar^{-1}\sum_{\mu} {p}_\mu [X^\mu] \int_0^z z^{s-1} (1-rz^r) dz)$ is a hypergeometric KP tau function studied in~\cite{DKPS-qr-ELSV,BDKS2}. So, in these cases the result of Corollary~\ref{cor:OmegaKP} was known.
\end{remark}

\appendix

\section{Variation of TR differentials}

\label{sec:varTR}

In this Appendix, we review formulas of~\cite{EO-1st,DOSS,K} describing variation of topological recursion differentials under a deformation of initial data of recursion.

Assume that the spectral curve data $(\Sigma,x,y,B)$ of topological recursion (or, more generally, the local spectral curve data as in Section~\ref{sec:Generalized}) varies in a family depending analytically on one additional parameter~$\tau$. Since zeros $q_j$ of $dx$ are nondegenerate, this property is preserved under variation of the spectral curve data, and by implicit function theorem, the dependence of positions of the points $q_j$ on the parameter is analytic. We assume that no zeros of $dx$ other than $q_1,\dots,q_N$ appear for the deformed spectral curve, or that additional zeros are ignored and do not contribute to the recursion, by definition. Then one may expect that the variation of the differentials~$\omega^{(g)}_{n}$ is determined by the variation of the functions $x$, $y$, and the bidifferential $B$ in a neighborhood of the points $q_j$. In fact, this is the case and we review here explicit formulas for the variation of $\omega^{(g)}_{n}$.

In order to give a meaning to the derivative of $\omega^{(g)}_{n}$ over parameter~$\tau$ we need to identify spectral curves for different parameter values, at least locally near the points~$q_j$. Let us take an arbitrary meromorphic function $\zeta$ on the spectral curve and denote by $\tfrac{d}{d \tau}\big|_\zeta$ the derivative at fixed~$\zeta$, that is, the partial derivative over~$\tau$ computed in any local coordinate on the spectral curve such that the expression for~$\zeta$ in this local coordinate is independent of~$\tau$.

Denote
\begin{align}\label{eq:omegast}
\omega^{(g),{\rm st}}_{n,2}(z,\tilde z;z_{\set n})&=
\frac12\biggl(\omega^{(g-1)}_{n+2}(z,\tilde z,z_{\set{n}})
+\!\!\!\!\sum\limits_{\substack{g_1+g_2=g,~
I_1\sqcup J_2=\set{n}\\\text{no }(0,1)\text{ or }(0,2)\text{ terms}}}\!\!\!\!
\omega^{(g_1)}_{|I_1|+1}(z,z_{I_1})\,\omega^{(g_2)}_{|I_2|+1}(\tilde z,z_{I_1})\biggr),
\\
\widetilde{\cW}^{(g),1}_{n}(z;z_{\set n})&=
\frac1{2\,dx(z)}\biggl(\omega^{(g-1)}_{n+2}(z,z,z_{\set{n}})
+\!\!\!\!\sum\limits_{\substack{g_1+g_2=g\\ I_1\sqcup J_2=\set{n}}}\!\!\!\!
\omega^{(g_1)}_{|I_1|+1}(z,z_{I_1})\,\omega^{(g_2)}_{|I_2|+1}(z,z_{I_1})\biggr)
\\ \notag &=
\frac1{dx(z)}\biggl(
\omega^{(g),{\rm st}}_{n,2}(z,z;z_{\set n})
 +\omega^{(g)}_{n+1}(z,z_{\set n})\,y(z)\,dx(z) \\ \notag & \qquad \qquad \ +\sum_{i=1}^n\omega^{(g)}_{n}(z,z_{\set{n}\setminus\{i\}})\,B(z,z_i)
 \biggr).
\end{align}

\begin{theorem}[\cite{K}]\label{thm:kaz}
For $2g-2+n>0$ the dependence of TR differentials $\omega^{(g)}_{n}$ on the parameter~$\tau$ is described by the equation
\begin{multline}\label{eq:xydef}
\frac{d}{d \tau}\Big|_\zeta\omega^{(g)}_{n}(z_{\set n})=\sum_{j=1}^N\res\limits_{z=q_j}\Bigg(
\omega^{(g)}_{n+1}(z,z_{\set n})\int\limits^z\!\!\tfrac{d}{d \tau}\big|_\zeta\bigl(y\,dx\bigr)-
\widetilde{\cW}^{(g),1}_n(z;z_{\set n})\tfrac{d}{d \tau}\big|_\zeta x(z)\Bigg)
 \\+\sum_{i,j=1}^N\res_{z=q_i}\res_{\tilde z=q_j}
 \omega^{(g),{\rm st}}_{n,2}(z,\tilde z,z_{\set n})\int\limits^z\!\!\!\int\limits^{\tilde z}\!\!\tfrac{d}{d \tau}\big|_\zeta B
 +\sum_{i=1}^n\sum_{j=1}^N\res\limits_{z=q_j}\omega^{(g)}_{n}(z,z_{\set{n}\setminus\{i\}})\int\limits^z\!\!\tfrac{d}{d \tau}\big|_\zeta B(\cdot,z_i).
\end{multline}
\end{theorem}

\begin{remark}
In the cases we apply this formula in the present paper in~\eqref{eq:var-of-y} or~\eqref{eq:omega-B-variation}, the curve is fixed so that the derivative over~$\tau$ has an invariant meaning and coincides with the derivative at fixed~$\zeta$ for any meromorphic function $\zeta$ independent of~$\tau$. By that reason we drop a reference to a choice of~$\zeta$ in~\eqref{eq:var-of-y} and~\eqref{eq:omega-B-variation}. Besides, since the Bergman kernel is also fixed, the terms on the second line in~\eqref{eq:xydef} do not contribute in these cases.
\end{remark}

\begin{proof}[Proof of Theorem~\ref{thm:kaz}]

Though an independent proof of formula~\eqref{eq:xydef} is given in~\cite{K}, let us explain how it 
can be obtained from 
the deformation formulas of~\cite{Eynard-Intersection,DOSS}.


First, let us prove the following formula:
\begin{multline}\label{eq:xydef-fixedx}
\frac{d}{d \tau}\Big|_x\omega^{(g)}_{n}(z_{\set n})=\sum_{j=1}^N\res\limits_{z=q_j}
\omega^{(g)}_{n+1}(z,z_{\set n})\int\limits^z\!\!\tfrac{d}{d \tau}\big|_x\bigl(y\,dx\bigr)
 \\+\sum_{i,j=1}^N\res_{z=q_i}\res_{\tilde z=q_j}
 \omega^{(g),{\rm st}}_{n,2}(z,\tilde z,z_{\set n})\int\limits^z\!\!\!\int\limits^{\tilde z}\!\!\tfrac{d}{d \tau}\big|_x B
 +\sum_{i=1}^n\sum_{j=1}^N\res\limits_{z=q_j}\omega^{(g)}_{n}(z,z_{\set{n}\setminus\{i\}})\int\limits^z\!\!\tfrac{d}{d \tau}\big|_x B(\cdot,z_i), 
\end{multline}
i.e. the special case of formula~\eqref{eq:xydef} when $x$ is not being varied.

Let us start with proving the first line of ~\eqref{eq:xydef-fixedx}, i.e. the case when just $y$ depends on the parameter $\tau$. For completeness, we give a direct proof, but it also can be deduced from the statements of~\cite[Theorem~4.1]{Eynard-Intersection} and~\cite[Theorem~3.7]{DOSS}.

 Recall that $\omega^{(g)}_{n}$ can be expressed as a sum over graphs as in~\cite[Section~4.5]{EO-1st}. Note that this formula for $\omega^{(g)}_{n}$ in terms of sums over graphs uses only local information, cf.~\cite{Eynard-Intersection,DOSS}. For brevity, we do not repeat this construction (of $\omega^{(g)}_{n}$ as a sum over graphs) here and just refer to the original paper.  Note that in our case out of all of the ingredients only
 \begin{equation}
 	K(z,w)\coloneqq \dfrac{\int_{\bar{w}}^w B(\cdot, z)}{2(y(w) - y(\bar{w}))dx(w)} 
 \end{equation} 
 depends on $\tau$, where $\bar{w}$ corresponds to taking the local involution near one of the critical points of $x$ such that $x(\bar w) = x(w)$. Note that
\begin{align}
	\dtau K(z, w) &= - K(z, w) \frac{\dot{y}(w) - \dot{y}(\bar w)}{y(w) - y(\bar w)},
\end{align}
where $\dot{y}(w) = \dtau y(w)$. Consider
\begin{equation}
	\sum_{j=1}^N\res\limits_{z=q_j}
	\omega^{(g)}_{n+1}(z,z_{\set n})\int\limits^z\!\!\dot{y}\,dx.
\end{equation}
In the Eynard--Orantin sum over graphs for $\omega^{(g)}_{n+1}(z,z_{\set n})$ the variable $z$ appears on exactly one leaf of each graph, as part of 
\begin{equation}	
\res_{w_2=\vec q}K(w_1,w_2)B(w_2,z)f(\bar w_2)
\end{equation}
 (if the leaf is attached as the left child of the respective vertex, and with $w_2$ and $\bar w_2$ interchanged between the arguments of $B$ and $f$ otherwise), where $f(\bar w_2)$ corresponds to the contribution of the part of the subgraph growing out of the other child of the vertex labeled with $w_2$, and $\res_{w_2=\vec q}$ stands for $\sum_{j=1}^N\res_{w_2=q_j}$. We have:
\begin{align} \label{eq:resexch}
	\res\limits_{z=\vec q}\res\limits_{w_2=\vec q}K(w_1,w_2)B(w_2,z)f(\bar w_2)\int\limits^z\!\!\dot{y}\,dx & =	\res\limits_{w_2=\vec q}\res\limits_{z=\vec q}K(w_1,w_2)B(w_2,z)f(\bar w_2)\int\limits^z\!\!\dot{y}\,dx \\ \notag &\quad  + 	\res\limits_{w_2=\vec q}\res\limits_{z=w_2}K(w_1,w_2)B(w_2,z)f(\bar w_2)\int\limits^z\!\!\dot{y}\,dx,
\end{align}
cf.~\cite[Eq.~(3.43)]{DOSS}. Note that the first term in the right hand side of~\eqref{eq:resexch} vanishes as $B(w_2,z)\int^z\!\!\dot{y}\,dx$ is regular in $z$ at the critical points of $x$. Since $B$ has a double pole at the diagonal with a bi-residue $1$ and no other poles, the residue with respect to $z$ can be taken and the right hand side of~\eqref{eq:resexch} is thus equal to:
\begin{equation}\label{eq:Kfdydx}
		\res\limits_{w_2=\vec q}K(w_1,w_2)f(\bar w_2)\dot{y}(w_2) dx(w_2) .
\end{equation}
This can be nonzero only if the other child of the $w_2$ vertex is an arrowed edge, otherwise the dependence on $w_2$ is regular at the critical points of $x$ and the residue vanishes. Thus~\eqref{eq:resexch} is equivalent to
\begin{align}\label{eq:KdydxK}
	&\res\limits_{w_2=\vec q}\res\limits_{w_3=\vec q}K(w_1,w_2)\dot{y}(w_2) dx(w_2) K(\bar w_2,w_3) g(w_3) \\ \notag
	&=\res\limits_{w_2=\vec q}\res\limits_{w_3=\vec q} \dfrac{\int_{\bar{w_2}}^{w_2} B(\cdot, w_1)}{2(y(w_2) - y(\bar{w_2}))dx(w_2)}\, \dot{y}(w_2) dx(w_2)  \, \dfrac{\int_{\bar{w_3}}^{w_3} B(\cdot, \bar w_2)}{2(y(w_3) - y(\bar{w_3}))dx(w_3)}\,g(w_3),
\end{align}
where $f(\bar w_2) = \res_{w_3=\vec q}K(\bar w_2, w_3) g(w_3)$.
Again using the residue exchange formula (taking into account that $\int_{\bar{w_3}}^{w_3} B(\cdot, w_2)$ has poles at $w_2=w_3$ and $w_2=\bar w_3$, cf.~\cite[Eq.~(3.19)]{DOSS}), we get that~\eqref{eq:KdydxK} is equal to
\begin{align}\label{eq:w3w2threeterms}
	&\res\limits_{w_3=\vec q}\res\limits_{w_2=\vec q} \dfrac{\int_{\bar{w_2}}^{w_2} B(\cdot, w_1)}{2(y(w_2) - y(\bar{w_2}))}\, \dot{y}(w_2)  \, \dfrac{\int_{\bar{w_3}}^{w_3} B(\cdot, \bar w_2)}{2(y(w_3) - y(\bar{w_3}))dx(w_3)}\,g(w_3) \\ \notag
	&+\res\limits_{w_3=\vec q}\res\limits_{w_2=w_3} \dfrac{\int_{\bar{w_2}}^{w_2} B(\cdot, w_1)}{2(y(w_2) - y(\bar{w_2}))}\, \dot{y}(w_2)  \, \dfrac{\int_{\bar{w_3}}^{w_3} B(\cdot, \bar w_2)}{2(y(w_3) - y(\bar{w_3}))dx(w_3)}\,g(w_3) \\ \notag
	&+\res\limits_{w_3=\vec q}\res\limits_{w_2=\bar w_3} \dfrac{\int_{\bar{w_2}}^{w_2} B(\cdot, w_1)}{2(y(w_2) - y(\bar{w_2}))}\, \dot{y}(w_2)   \, \dfrac{\int_{\bar{w_3}}^{w_3} B(\cdot, \bar w_2)}{2(y(w_3) - y(\bar{w_3}))dx(w_3)}\,g(w_3)
\end{align}
The first term in~\eqref{eq:w3w2threeterms} vanishes since the expression under the residue is regular in $w_2$ at the critical points of $x$ (since the pole coming from the first denominator cancels with the zero coming from the first numerator). Again, since $B$ has a double pole at the diagonal with biresidue $1$, we can take the residue with respect to $w_2$ in the second and third terms of~\eqref{eq:w3w2threeterms}, getting
\begin{align} \label{eq:longdyK}
	&-\dfrac{1}{2}\res\limits_{w_3=\vec q} \dfrac{\dot{y}(w_3)}{y(w_3) - y(\bar{w_3})}\,   \dfrac{\int_{\bar{w_3}}^{w_3} B(\cdot,  w_1)}{2(y(w_3) - y(\bar{w_3}))dx(w_3)}\,g(w_3) \\ \notag
	&\qquad +\dfrac{1}{2}\res\limits_{w_3=\vec q} \dfrac{\dot{y}(\bar w_3)}{y(\bar w_3) - y(w_3)}  \, \dfrac{\int_{{w_3}}^{\bar w_3} B(\cdot, w_1)}{2(y(w_3) - y(\bar{w_3}))dx(w_3)}\,g(w_3) \\ \notag
	&=-\dfrac{1}{2}\res\limits_{w_3=\vec q} \dfrac{\dot{y}(w_3)-\dot{y}(\bar w_3)}{y(w_3) - y(\bar{w_3})}\,   \dfrac{\int_{\bar{w_3}}^{w_3} B(\cdot, w_1)}{2(y(w_3) - y(\bar{w_3}))dx(w_3)}\,g(w_3) \\ \notag
	&=-\dfrac{1}{2}\res\limits_{w_3=\vec q} \dfrac{\dot{y}(w_3)-\dot{y}(\bar w_3)}{y(w_3) - y(\bar{w_3})}\,  K(w_1,w_3) \,g(w_3) \\ \notag
	&=\dfrac{1}{2}\res\limits_{w_3=\vec q} \, g(w_3)  \dtau K(w_1,w_3)
\end{align}
(we used that $\bar w_2 = w_3$ if and only if $w_2=\bar w_3$). Analogously, we get exactly the same expression for the case when the $z$-leaf in a graph in the graph sum for $\omega^{(g)}_{n+1}$ was the right child. Taking all of this into account (as well as the remark after formula~\eqref{eq:Kfdydx}) and comparing the graph sums for $\dtau \omega^{(g)}_{n}(z_{\set n})$ (where we use the Leibniz rule to take the derivative of each summand with respect to $\tau$) and for $\res_{z=\vec q}\omega^{(g)}_{n+1}(z,z_{\set n})\int^z\!\!\dot{y}\,dx$, we conclude that they coincide. The coefficient $\frac{1}{2}$ which we got in the bottom line of~\eqref{eq:longdyK} cancels since for each $K$ in each graph in the graph sum for $\omega^{(g)}_{n}(z_{\set n})$ on which the derivative falls via the Leibniz rule we have exactly two corresponding graphs in the graph sum for $\res_{z=\vec q}\omega^{(g)}_{n+1}(z,z_{\set n})\int^z\!\!\dot{y}\,dx$, which are obtained by inserting a new vertex in the middle of this arrowed edge making two arrowed edges out of it, and attaching the $z$-leaf to that vertex, which can be done in two ways: as the left and the right child. This proves~\eqref{eq:xydef-fixedx} for the case when only $y$ depends on $\tau$.

The two sums in the second line of~\eqref{eq:xydef-fixedx}  (i.e. the case when $B$ depends on $\tau$ but $y$ does not) can be obtained from~\cite[Proof of Theorem 3.7]{DOSS}; specifically, from~\cite[Eqs.~(3.50)--(3.51)]{DOSS} (see also ~\cite{Eynard-Intersection}). Note that while~\cite[Eqs.~(3.50)--(3.51)]{DOSS} are written for local topological recursion, they are actually completely general and work for global spectral curves as well; one just needs to remove all the indices $i(v)$ and replace $\mathop{\res_{Z_h=0}}$ with $\sum_{i=1}^N\mathop{\res_{Z_h=q_i}}$. To get the formula that we need, one has to replace $B_{\mathrm{KdV}}$ with $B(\tau_0)$ and $B_{\mathrm{reg}}$ with $B(\tau)-B(\tau_0)$ and then consider the linear term in the Taylor expansion of~\cite[Eq.~(3.51)]{DOSS} in $\tau$ at $\tau=\tau_0$. In place of $\omega^{\mathrm{KdV}}$ we will get just $\omega(\tau_0)$. Only terms with a single $B^{\mathrm{reg}}$ survive in the sum over graphs in~\cite[Eq.~(3.50)]{DOSS}. The terms where $B^{\mathrm{reg}}$ comes from the second line of~\cite[Eq.~(3.51)]{DOSS} correspond precisely to the first sum in the second line of~\eqref{eq:xydef-fixedx}, precisely with $\omega^{(g),{\rm st}}_{n,2}$ given by~\eqref{eq:omegast}: the graphs with a single vertex and a single edge looping from it back to itself correspond to the first term in~\eqref{eq:omegast}, while the graphs with two vertices and a single edge connecting them correspond to the sum in~\eqref{eq:omegast}. Finally,  taking into account that $B^{\mathrm{reg}}$ enters $B$ in the third line of~\cite[Eq.~(3.51)]{DOSS}, we get the second sum in the second line of~\eqref{eq:xydef-fixedx} from there.  

Now let us obtain~\eqref{eq:xydef} from~\eqref{eq:xydef-fixedx}. Passing from the derivative at fixed~$x$ to the derivative at fixed~$\zeta$ is governed by the following relations: for a given function~$f$, a meromorphic $1$-form $\lambda$, and an $n$-differential $\omega_n$, respectively, we have
\begin{align}
\tfrac{d}{d u}\big|_\zeta f&=\tfrac{d}{d u}\big|_xf+\dot x\frac{df}{dx}, \label{eq:dzetaf}
\\\tfrac{d}{d u}\big|_\zeta\lambda &=\tfrac{d}{d u}\big|_x\lambda+d\left(\dot x\frac{\lambda}{dx}\right), 
\\\tfrac{d}{d u}\big|_\zeta\omega_n &=\tfrac{d}{d u}\big|_x\omega_n+\sum_{i=1}^n d_i\left(\dot x_i\frac{\omega_n}{dx_i}\right), \label{eq:dzetaomega}
\end{align}
where $\dot x=\tfrac{d}{d u}\big|_\zeta x$. The first equation is the usual chain rule for partial derivatives, and the other two follow from the first one. Substituting these expressions to~\eqref{eq:xydef-fixedx}, we arrive, after simplifications, exactly to~\eqref{eq:xydef} (do not forget that one needs not only to use~\eqref{eq:dzetaomega} for $\omega^{(g)}_n$ and then substitute $\tfrac{d}{d u}\big|_x\omega_n$ from~\eqref{eq:xydef-fixedx}, but also to use~\eqref{eq:dzetaf} and ~\eqref{eq:dzetaomega} backwards to express $\tfrac{d}{d \tau}\big|_xy$ and $\tfrac{d}{d \tau}\big|_xB$ which appear there in terms of $\tfrac{d}{d \tau}\big|_\zeta y$ and $\tfrac{d}{d \tau}\big|_\zeta B$ respectively).
\end{proof}

\section*{Statements and declarations}

\begin{enumerate}
	\item The authors have no financial or non-financial interests that are directly or indirectly related to the work submitted for publication.
	\item There is no data collected, produced or analyzed within the research related to the paper.
\end{enumerate}

\printbibliography

\end{document}